\newcommand{\CC}{{\mathbb{C}}}
\newcommand{\PP}{{\mathbb{P}}}
\newcommand{\QQ}{{\mathbb{Q}}}
\def\A{{\mathcal A}}
\def\H{{\mathcal H}}
\def\O{{\mathcal O}}
\def\T{{\mathcal T}}
\def\res{{\mathrm{res}}}
\newcommand{\p}{{\partial}}
\newcommand{\eps}{\varepsilon}
\newcommand{\bt}{{\bf t}}
\newtheorem{theorem}{Theorem}[section]
\newtheorem*{theorem*}{Theorem}
\newtheorem{proposition}[theorem]{Proposition}
\newtheorem{lemma}[theorem]{Lemma}
\newtheorem{corollary}[theorem]{Corollary}
\newtheorem*{corollary*}{Corollary}
\newtheorem{example}[theorem]{Example}
\newtheorem{remark}[theorem]{Remark}
\def\&{\vspace{-5pt}&}
\newsavebox{\@brx}
\newcommand{\llangle}[1][]{\savebox{\@brx}{\(\m@th{#1\langle}\)}%
  \mathopen{\copy\@brx\kern-0.5\wd\@brx\usebox{\@brx}}}
\newcommand{\rrangle}[1][]{\savebox{\@brx}{\(\m@th{#1\rangle}\)}%
  \mathclose{\copy\@brx\kern-0.5\wd\@brx\usebox{\@brx}}}
\numberwithin{equation}{section}
\begin{document}

\title{
Genus zero Whitham hierarchy via Hurwitz--Frobenius manifolds
}

\author{Alexey Basalaev}
\address{A. Basalaev:
\newline 
Faculty of Mathematics, HSE University, Usacheva str., 6, 119048 Moscow, Russian Federation
}
\email{abasalaev@hse.ru}

\date{\today}

 \begin{abstract}
 B.~Dubrovin introduced the structure of a Dubrovin--Frobenius manifold on a space of ramified coverings of a sphere by a genus $g$ Riemann surface with the prescribed ramification profile. This is now known as a genus $g$ Hurwitz--Frobenius manifold.
 We investigate the genus zero Hurwitz--Frobenius manifolds and their connection to the integrable hierarchies. In particular, we prove that the Frobenius potentials of the genus zero Hurwitz--Frobenius manifolds stabilize and therefore define an infinite system of commuting PDEs.

 We show that this system of PDEs is equivalent to the genus zero Whitham hierarchy of I.~Krichever.
 Our result shows that this system of PDEs has both Fay form, depending heavily on the flat structure fo the Hurwitz--Frobenius manifold and coordinatefree Lax form. We also show how to extend this system of PDEs to the multicomponent KP hierarchy via the $\hbar$--deformation of the differential operators.
 \\
 \\
 Keywords: integrable systems, Dubrovin--Frobenius manifolds, Hurwitz spaces.
 \\
 MSC codes: 14H70, 14N35.
 \end{abstract}
 \maketitle

 
\section{Introduction}
Tense connection between the Dubrovin--Frobenius manifolds and the integrable hierarchies had been known since the early 90s. This includes the general construction of Dubrovin--Zhang \cite{DZ} and Buryak \cite{Bu} and the more specific examples including \cite{DVV,D1}, \cite{FGM, MT}, \cite{LRZ, FSZ, DLZ}, \cite{BDbN,B22,B24}.
We will be interested in the specific examples because they assume the Dubrovin--Frobenius manifold to be in some way special.
The references given are groupped by the approach used to connect the Dubrovin--Frobenius manifolds with the integrable hierarchies. The first one uses Lax operators, the second Hirota-like bilinear equations, the third bihamiltonian structure and the last one the Fay--type equations. These approaches are very different and not totally equal. In particular, the Lax operator approach was only applied for A--type Dubrovin--Frobenius manifolds. The bilinear equations form is only know for few examples comming from Saito theory. The Fay--type approach above can be only applied to a family of Dubrovin--Frobenius manifolds that satisfies the certain stabilization condition.

In this note we extend both the Lax operator approach of \cite{DVV,D1} and the Fay--type approach of \cite{BDbN}. In particular, we consider the specific Dubrovin--Frobenius manifolds called genus zero Hurwitz--Frobenius manifolds and show that both approaches can be used for them giving the Whitham hieararchy introduced by I.~Krichever \cite{K94}.

\subsection*{Dubrovin--Frobenius manifolds} were introduced by B.~Dubrovin in the early 90s (cf. \cite{D2}). This is a compex manifold $M$ equipped with an associative and commutative product $\circ: \T_M \otimes_M \T_M \to \T_M$ on the holomorphic tangent sheaf, a flat pairing $\eta: \T_M \otimes_M \T_M \to \O_M$ and a flat unit vector field. These data all together should satisfy the certain integrability condition. In particular, there should exist the special coordinates $t^1,\dots,t^n$, called \textit{flat} and a special function $F \in \O_M$, called \textit{potential}, such that the unit vector field coincides with $\frac{\p}{\p t^1}$ and
\[
    \eta( \frac{\p}{\p t^\alpha},\frac{\p}{\p t^\beta}) = \frac{\p^3 F}{\p t^1 \p t^\alpha \p t^\beta }  = \eta_{\alpha,\beta} \in \CC,
    \quad
    \frac{\p}{\p t^\alpha} \circ \frac{\p}{\p t^\beta} = \sum_{\gamma,\delta=1}^n \frac{\p^3 F}{\p t^\alpha \p t^\beta \p t^\gamma} \eta^{\gamma,\delta} \frac{\p}{\p t^\delta}.
\]

The potential is subject to huge system of PDEs called WDVV equation. Sometimes the Dubrovin--Frobenius manifold itself is built from the potential given - like in Gromov--Witten theory. In this case the formulae above should be read from the right to the left.
In the ``geometric'' cases the pairing $\eta$ and the product $\circ$ are defined, but the potential and the flat coordinates have to be found.
Examples of such Dubrovin--Frobenius manifolds are Saito--Frobenius (cf. \cite{ST}) and Hurwitz--Frobenius manifolds.

\subsection*{Hurwitz--Frobenius manifolds} It was observed by B.~Dubrovin \cite[Lecture 5]{D2} that the space $\H_{g,K}$ of ramified coverings of $\PP^1$ by a genus $g$ Riemann surface with the prescribed ramification profile $K$ can be endowed with a structure of Dubrovin--Frobenius manifold. Such structures were investigated in \cite{S,M17,B14,PS} etc..
B. Dubrovin constructs $\eta$ and $\circ$ via the certain residue calculus.

In this text we consider the space $\H_{0,K}$ with $g=0$ and $K = \lbrace k_0,\dots,k_N \rbrace$ of $\lambda: \PP^1 \to \PP^1$
\[
 \lambda(z) := z^{k_0} + \sum _{a=1}^{k_0-1} v_{0,a} z^{a-1} + \sum _{i=1}^N \sum _{a=1}^{k_i} v_{i,a-1} \left(z-q_i\right)^{-a}
\]
where $v_{i,a}$ and $q_i$ are the coordinates of $\H_{0,K}$. These coordinates look essential, however these are not flat. The flat coordinates $t_{i,\alpha} = t_{i,\alpha}(v_{i,\bullet})$ are obtained via the certain residue calculus involving the function $\lambda$.

The Hurwitz--Frobenius manifold $\H_{0,k_0}$ (with $K$ of length $1$ and $N = 0$) is isomorphic to the Dubrovin--Frobenius manifold of type $A_{k_0-1}$. The function $\lambda$ coincides then with the symbol of Lax operator of $(k_0-1)$--KdV what was used extensively in \cite{DVV}.

The first theorem of this paper is the following.
Consider $\H_{0,K}$ and $\H_{0,L}$ with $K = \lbrace k_0,k_1,\dots,k_N \rbrace$ and $L = \lbrace l_0,l_1,\dots,l_N \rbrace$. Note that the lengths of the sets $K$ and $L$ are assumed to be the same. We will say that $L \ge K$ if $l_j \ge k_j$ for every index $j$.

\begin{theorem*}[Theorem~\ref{theorem: stabilization} below]
    Let $L \ge K$, denote by $F^L$ and $F^K$ the potentials of $\H_{0,L}$ and $\H_{0,K}$ respectively. For any $0 \le i,j \le N$ let $\alpha \le k_i$, $\beta \le k_j$ if $i \neq j$ and $\alpha+\beta \le k_i-2$ if $i=j$.

    Then
    \[
        \frac{\p^2 F^{K}}{\p t_{i,\alpha} \p t_{j,\beta}} \mid_{t_{r,\gamma} \ = \ - k_r s_{r,k_r-\gamma}}
        =
        \frac{\p^2 F^{L}}{\p t_{i,\alpha} \p t_{j,\beta}} \mid_{t_{r,\gamma} \ = \ - k_r s_{r,k_r-\gamma}}
    \]
    assumed as the equality of function in $s_{r,\bullet}$ and $q_1,\dots,q_N$.
\end{theorem*}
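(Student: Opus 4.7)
The strategy is to rewrite both sides via Dubrovin's residue formulas for the Hurwitz--Frobenius potential and to show that, after the substitution $t_{r,\gamma} = k_r s_{r, k_r-\gamma}$, each side depends on the superpotential $\lambda$ only through its expansion at each pole $q_r$ (and at $\infty$) truncated at order $k_r$. Because these truncated expansions coincide for $\H_{0,K}$ and $\H_{0,L}$ once the extra coordinates of $\H_{0,L}$ are eliminated by the same substitution, the claimed identity will follow.

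I would first recall Dubrovin's description of the Hurwitz--Frobenius potential. For $\H_{0,K}$ each flat coordinate $t_{i,\alpha}$ is paired with a canonical meromorphic differential $\omega_{i,\alpha}$ on $\PP^1$, built from the principal part of $\lambda^{\alpha/k_i}\, dz$ at $q_i$ (and similarly at $\infty$ for $i=0$); the mixed second derivative $\p^2 F^K / \p t_{i,\alpha}\, \p t_{j,\beta}$ is then a sum over critical points of $\lambda$ of residues of $\Omega_{i,\alpha}\,\omega_{j,\beta}$, where $\Omega_{i,\alpha}$ is a primitive of $\omega_{i,\alpha}$. By standard contour manipulations this sum is rewritten as a sum of residues at the poles of $\lambda$, turning the computation into a purely local one.

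The second step is to interpret the substitution $t_{r,\gamma} = k_r s_{r, k_r-\gamma}$ as the passage to Krichever's times. The $s_{r,\bullet}$ are identified with the Laurent coefficients of the appropriate branch of $\lambda^{1/k_r}$ at $q_r$ (and at $\infty$ for $r=0$), and the factor $k_r$ reconciles Dubrovin's normalization of $t_{r,\gamma}$ with Krichever's. Under this identification the residues of the previous step become the standard two-point functions of the Whitham hierarchy. The index bounds $\alpha \le k_i$, $\beta \le k_j$ for $i \ne j$ and $\alpha + \beta \le k_i - 2$ for $i = j$ are the sharp conditions that ensure each local residue at $q_r$ is computed from coefficients of $\lambda^{1/k_r}$ of order at most $k_r$.

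Finally, I would compare $\H_{0,K}$ and $\H_{0,L}$ for $L \ge K$. The $L$-superpotential carries $l_r$ free flat coordinates at each pole $q_r$; the substitution covers only the $k_r$ coordinates indexed by $\gamma$ such that $k_r - \gamma$ lies in the $s$-range, so the remaining $l_r - k_r$ are set to zero. The $K$-jets of the two superpotentials then match, and by the locality established above the second derivatives coincide as functions of $s_{r,\bullet}$ and $q_1, \dots, q_N$. The main obstacle I anticipate is the diagonal case $i = j$: the residue of $\Omega_{i,\alpha}\,\omega_{i,\beta}$ at $q_i$ involves a product with a logarithmic primitive, and keeping it within the stable regime is precisely what forces the sharper bound $\alpha + \beta \le k_i - 2$. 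Verifying that this bound propagates correctly through the residue calculus is the technical heart of the proof.
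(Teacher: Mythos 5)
Your strategy is sound but organized quite differently from the paper's argument. The paper never touches the residue formulas for the second derivatives: it factors the \emph{third} derivatives through the non-flat frame, writing $\frac{\p^3 F}{\p t_\alpha \p t_\beta \p t_\gamma} = \Psi_\alpha^a \Psi_\beta^b c_{a,b}^r (\Psi^{-1})^r_\delta \eta_{\delta\gamma}$ with $\Psi_\alpha^a = \p v_a/\p t_\alpha$, and then proves separately that $\Psi$ stabilizes (Proposition~\ref{prop: psi stabilization}, an explicit residue computation showing $\p v_{i,a}/\p t_{i,b}$ becomes independent of $k_i$ after $t_{i,k_i-j}=k_i s_{i,j}$) and that the structure constants $c_{a,b}^r$ stabilize (Proposition~\ref{prop: v-product stabilization} and Example~\ref{example: v-products in same sector}, because the relevant products of tangent vectors, viewed as rational functions, lie in both Hurwitz spaces — this is where your index bounds, in particular $\alpha+\beta\le k_i-2$ on the diagonal, really come from). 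Your route — Dubrovin's residue pairings for $\p^2 F/\p t_{i,\alpha}\p t_{j,\beta}$ plus locality at the poles — works directly with second derivatives, which actually matches the statement more closely than the paper's proof, since the latter establishes stabilization of third derivatives and leaves the passage to second derivatives implicit.

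One claim in your write-up is false as stated and must be repaired before the locality argument closes. You say the truncated expansions of $\lambda$ at $q_r$ coincide for $\H_{0,K}$ and $\H_{0,L}$ after the substitution. They cannot: $\lambda^K$ has a pole of order $k_r$ at $q_r$ while $\lambda^L$ has one of order $l_r>k_r$, so no truncation of their Laurent expansions in $z-q_r$ agrees. What stabilizes is the \emph{inverse} local data: the expansion $z-q_r=\frac{1}{k_r}\sum_j t_{r,j}z_r^{-(k_r-j)}+\dots$ of $z$ in the local coordinate $z_r=\lambda^{1/k_r}$, which after $t_{r,k_r-j}=k_r s_{r,j}$ becomes $\sum_j s_{r,j}z_r^{-j}+\dots$, independent of $k_r$ (Eq.~\eqref{eq: local variable expansion}; this is precisely the content of Proposition~\ref{prop: psi stabilization}). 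Your residues must be rewritten as residues in $z_r$ of expressions built from this expansion, not from the expansion of $\lambda$ itself. Relatedly, the extra coordinates of $\H_{0,L}$ are not set to zero: on the $L$-side the intended substitution is $t_{r,\gamma}=l_r s_{r,l_r-\gamma}$ for all $\gamma$, and the claim is that the resulting function of the $s_{r,\bullet}$ does not depend on the surplus variables. With these two corrections your approach goes through and is a legitimate alternative to the paper's.
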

In \cite{BDbN} the series of Dubrovin--Frobenius manifolds satisfying the equality above was called \textit{stabilizing}. It was observed in loc.cit. that stabilizing Dubrovin--Frobenius manifolds can be used to introduce infinite systems of commuting PDEs.

Consider a function $f$ depending on $\lbrace t_{i,\alpha} \rbrace$ where $\alpha = 0,1,2,\dots$  if $i \ge 1$ and $\alpha = 1,2,\dots,$ if $\alpha = 0$. Denote $\p_{i,\alpha} := \frac{\p f}{\p t_{i,\alpha}} $ The corresponding system of PDEs on $f$ reads
\begin{equation}\label{eq: BDbN form}
     \p_{i,\alpha} \p_{j,\beta} f = \frac{\p^2 F^K}{\p t_{i,\alpha} \p t_{j,\beta}} \mid_{t_{r,\gamma} \ = \ -k_r \p_{0,1} \p_{r,k_r-\gamma} f },
\end{equation}
where the set $K$ should be taken such that $\alpha \le k_i$, $\beta \le k_j$ and $\alpha+\beta \le k_i-2$ if $i=j$. The theorem above shows that any set $K$ satisfying this property will result in the same expression on the right hand side. This follows from WDVV equation that this systemd of PDEs commutes.

\begin{corollary*}
    The series of Hurwitz--Frobenius manifolds $\H_{0,K}$ defines a system of infinite commuting PDEs.
\end{corollary*}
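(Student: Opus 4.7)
The plan is to establish two facts: that the right-hand side of \eqref{eq: BDbN form} is unambiguous, and that the resulting system of PDEs is formally consistent (its flows commute).

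\emph{Well-definedness.} Given any pair of admissible indices $(i,\alpha),(j,\beta)$ and two sets $K,K'$ that both satisfy the hypotheses of \eqref{eq: BDbN form} for this pair, I form their componentwise maximum $M=(\max(k_r,k'_r))_{r=0}^{N}$. Then $M$ is still admissible and $M\ge K$, $M\ge K'$, so Theorem~\ref{theorem: stabilization} applied to the two pairs $(K,M)$ and $(K',M)$ yields
\[
\frac{\p^2 F^{K}}{\p t_{i,\alpha}\p t_{j,\beta}}\bigg|_{t_{r,\gamma}=k_r s_{r,k_r-\gamma}}
 \;=\; \frac{\p^2 F^{M}}{\p t_{i,\alpha}\p t_{j,\beta}}\bigg|_{\cdots}
 \;=\; \frac{\p^2 F^{K'}}{\p t_{i,\alpha}\p t_{j,\beta}}\bigg|_{\cdots}.
\]

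\emph{Commutativity.} Compatibility amounts to showing $\p_{l,\mu}[\p_{i,\alpha}\p_{j,\beta} f]=\p_{i,\alpha}[\p_{l,\mu}\p_{j,\beta} f]$. I would apply the chain rule to the right-hand side of \eqref{eq: BDbN form}, enlarging $K$ if necessary (permitted by well-definedness) to accommodate the new indices, and then use \eqref{eq: BDbN form} once more to eliminate the second derivatives of $f$ produced by the chain rule. The outcome is
\[
\p_{l,\mu}\p_{i,\alpha}\p_{j,\beta} f
= \sum_{s,\delta} \Bigl(\sum_{r,\gamma,r',\gamma'} F^K_{i\alpha,\,j\beta,\,r\gamma}\;\eta^{(r,\gamma)(r',\gamma')}\;F^K_{l\mu,\,r'\gamma',\,s\delta}\Bigr)\,k_s\,\p_{0,1}^2\p_{s,k_s-\delta} f,
\]
in which the bracketed factor has the shape $\sum_{c,d} F^K_{abc}\,\eta^{cd}\,F^K_{def}$ with $a=(i,\alpha)$, $b=(j,\beta)$, $e=(l,\mu)$, $f=(s,\delta)$. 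The WDVV equation for $F^K$ asserts precisely that this combination is symmetric under the swap $b\leftrightarrow e$; hence the expression for $\p_{l,\mu}\p_{i,\alpha}\p_{j,\beta} f$ is invariant under $(i,\alpha)\leftrightarrow (l,\mu)$, which is the required commutation.

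\emph{Main obstacle.} The key technical step is the identification of the chain-rule factor $k_r$ together with the index reversal $\gamma\mapsto k_r-\gamma$ with the inverse flat pairing, which on $\H_{0,K}$ should take the block-diagonal form $\eta^{(r,\gamma)(r',\gamma')}=k_r\,\delta_{r,r'}\,\delta_{\gamma+\gamma',k_r}$. Verifying this requires unpacking Dubrovin's residue construction of $\eta$ on $\H_{0,K}$ and checking that the normalization produces exactly the constant $k_r$ coupling $t_{r,\gamma}$ with $t_{r,k_r-\gamma}$. Once this normalization is pinned down, the calculation above reduces the commutativity of the PDE system to a single invocation of WDVV, and the corollary follows.
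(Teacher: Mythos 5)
Your argument is correct and takes essentially the same route as the paper: the corollary is obtained from Theorem~\ref{theorem: stabilization} (well--definedness of the right--hand side of \eqref{eq: BDbN form}, exactly your componentwise--maximum argument) together with the WDVV equations (commutativity), the paper deferring the chain--rule--plus--WDVV computation to \cite{BDbN}, which you have reconstructed. The only caveat is a sign: with the paper's conventions the inverse pairing is $\eta^{(r,\gamma)(r,k_r-\gamma)}=-k_r$ rather than $+k_r$, but this uniform sign does not affect the symmetry under $b\leftrightarrow e$ (and hence, by total symmetry of the third derivatives, under $a\leftrightarrow e$) that yields commutativity.
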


\subsection*{Whitham hierarchy}

I.~Krichever introduced in \cite{K94} the dispersionless hierarchy associated to the space of ramified coverings as above. This hierarchy was investigated in many papers including \cite{D2,GMMM,TT07,Za} etc.. The definition of I.~Krichever was given in the bilinear equation form however is also has the following Lax form, generalizing the idea of \cite{DVV}.

Let $z_0,z_1,\dots,z_N$ be infinite Laurent series at the punctures
\[
  z_0 = z + \sum_{j = 2}^\infty u_{0j} z^{-j+1} , \quad z_i = \frac{r_i}{z - q_i} + \sum_{j = 1}^\infty u_{i j} (z-q_i)^{j-1}
\]
for some $r_i$ and $u_{ij}$ depending on $v_{ij}$. Time evolution of the hierarchy is generated by
\[
    \p_{\alpha n} z_i = \lbrace \Omega_{\alpha n} (z), z_i \rbrace
\]
with the Poisson bracket
\[
 \lbrace f,g \rbrace = \frac{\p f}{\p z} \frac{\p g}{\p t_{01}} - \frac{\p f}{\p t_{01}} \frac{\p g}{\p z}
\]
and the Hamiltonians $\Omega_{\alpha n}$ given by the equalities
\[
    z_0^n = \Omega_{0n}(z) + O(z^{-1}), \ z \to \infty,
    \quad
    z_\alpha^n = \Omega_{\alpha n}(z) + O(1), \ z \to q_\alpha.
\]
This hierarchy is dispersionless by its definition.

Takasaki and Takebe gave in \cite{TT07} the \textit{Fay form} of genus zero Whitham hierarchy.
For $i =0,1,\dots$ consider the operators
\[
 D_i(z) := \sum_{\alpha = 1}^\infty \frac{z^{-\alpha}}{\alpha} \p_{i,\alpha}, \ \text{ with } \ \p_{i,\alpha} := \frac{\p}{\p t_{i,\alpha}}.
\]
Let the function $f$ depend on $\lbrace t_{i,\alpha} \rbrace$ where $\alpha = 0,1,2,\dots$  if $i \ge 1$ and $\alpha = 1,2,\dots$ if $\alpha = 0$.
The following system of equations in the ring of formal power series in $z^{-1}$ and $w^{-1}$ is the way to write an infinite system of PDEs on the function $f$.
\begin{align*}
    e^{D_0(z)D_0(w)f} &= 1 - \frac{\p_{0,1} (D_0(z) - D_0(w)) f}{z - w},
    \\
    z e^{D_0(z)(\p_{i,0} + D_i(w))f} &= z - \p_{0,1} (D_0(z) - \p_{i,0} - D_i(w)) f,
    \\
    e^{(\p_{i,0} +D_i(z))(\p_{i,0} +D_i(w))f} &= - \frac{zw \p_{0,1} (D_i(z) - D_i(w)) f}{z - w},
    \\
    e^{(\p_{i,0} + D_i(w))(\p_{j,0} + D_j(w))f} &= - \p_{0,1} (\p_{i,0}+ D_i(z) - \p_{j,0} - D_j(w)) f, \quad i \le j.
\end{align*}
Takasaki and Takebe proved it to be equivalent to genus zero Whitham hierarchy.

\begin{theorem*}[Theorem~\ref{theorem: Fay} below]
    Consider the Hurwitz--Frobenius manifold $\H_{0,k_0,k_1,\dots,k_N}$ with the potential $F$ and denote $m := \min(k_0,k_1,\dots,k_N)$.

    After the rescaling $t_{i,\alpha} \mapsto \alpha t_{i,\alpha}$ with $ i \ge 0$ and $\alpha \ge 1$ the potential $F$ satisfies the Fay form of the genus zero Whitham hierarchy up to total order $-m$ of the formal variables $z$ and $w$.
\end{theorem*}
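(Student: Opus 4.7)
The plan is to verify each of the four Fay identities by expanding both sides as formal power series in $z^{-1}$ and $w^{-1}$ and matching coefficients against Dubrovin's explicit residue formulas for the mixed second derivatives $\partial_{i,\alpha}\partial_{j,\beta}F$ of the Hurwitz--Frobenius potential. The rescaling $t_{i,\alpha}\mapsto \alpha t_{i,\alpha}$ is tailored to absorb the $1/\alpha$ weights in $D_i(z)$, so that after rescaling the coefficient of $z^{-\alpha}$ in $D_0(z)$ becomes simply $\partial_{0,\alpha}$. By Theorem~\ref{theorem: stabilization}, one can fix a single $K$ whose entries $k_r$ are all larger than the truncation threshold $m$, so that every second derivative appearing up to total order $-m$ is simultaneously defined on $\H_{0,K}$ and coincides with the corresponding derivative on any larger Hurwitz--Frobenius manifold; in particular, the $\partial_{i,\alpha}$-derivatives of $F$ used in the Fay identities are unambiguous within the stated truncation.

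The first step is to recall from Dubrovin's construction the residue formulas expressing $\partial_{i,\alpha}\partial_{j,\beta}F$ as residues at the ramification points of $\lambda$, built from the local uniformizers $z_0,z_1,\dots,z_N$ and antiderivatives of the forms $z_i^\alpha\,d\lambda$, together with the parallel formulas for $\partial_{i,0}F$ that tie the $t_{i,0}$-flow to the position $q_i$. These are precisely the quantities underlying Krichever's Hamiltonians $\Omega_{\alpha n}$, so on the Hurwitz side the four Fay identities translate into identities among residues of products of the local coordinates. Next I would take each Fay identity in turn. For the first one, expansion of the exponential produces nonlinear contributions $(D_0(z)D_0(w)F)^k$ with $k\ge 2$ that live at total order $\le -2m$ and hence fall below the truncation, so within the stated order the identity reduces to a linear one which, after substituting the Dubrovin residue formulas, becomes the classical expansion of $(z_0(z)-z_0(w))/(z-w)$ at $z,w\to\infty$; this holds by the formal Taylor expansion of $z_0=\lambda^{1/k_0}$. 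The remaining three Fay identities are handled analogously by expanding around the appropriate pairs of ramification points.

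The main obstacle is the combinatorial bookkeeping: carefully tracking which coefficient of $z^{-\alpha}w^{-\beta}$ sits at which total order in each identity, and verifying that truncation at order $-m$ reliably discards all nonlinear contributions from the exponentials. A secondary delicate point is reconciling the $\partial_{i,0}$-insertions on the left with the $q_i$-shift implicit on the right; this will follow from the standard Dubrovin formula for $\partial_{i,0}F$ in terms of $\res_{q_i}\lambda\,d z_i$ and its analogues, which produce exactly the combinations $\partial_{i,0}+D_i(z)$ appearing in the Takasaki--Takebe identities. Once these matchings are in place, the identification of both sides of each Fay identity reduces to a direct comparison of residues at infinity and at the punctures $q_i$, valid precisely up to the order $-m$ allowed by the stabilization.
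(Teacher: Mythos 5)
There is a genuine gap at the heart of your argument. Your key simplification is the claim that, in each Fay identity, the nonlinear contributions $\tfrac{1}{k!}\bigl(D_i(z)D_j(w)F\bigr)^k$ with $k\ge 2$ coming from the exponential ``live at total order $\le -2m$ and hence fall below the truncation,'' so that the identity becomes linear. This is false: the exponent $D_0(z)D_0(w)F=\sum_{\alpha,\beta\ge 1}\tfrac{z^{-\alpha}w^{-\beta}}{\alpha\beta}\partial_{0,\alpha}\partial_{0,\beta}F$ begins at total order $-2$ (the term $z^{-1}w^{-1}$), not at order $-m$, so its square already contributes at total order $-4$. As soon as $m\ge 4$ these quadratic (and higher) terms survive the truncation at order $-m$, and the identity you must prove is genuinely nonlinear. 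Consequently the reduction to ``the classical expansion of $(z_0(z)-z_0(w))/(z-w)$'' does not go through; the whole content of the Fay form is that the exponential of the full matrix of second derivatives $\partial_{i,\alpha}\partial_{j,\beta}F$ is expressed through the single row $\partial_{0,1}\partial_{j,\beta}F$, and that cannot be seen at the linear level.

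The paper's proof keeps the exponential intact and handles the nonlinearity differently: after reducing via Theorem~\ref{theorem: stabilization} to $K=\{k,\dots,k\}$ and working in the truncated ring $\mathcal{A}_k$, it introduces the series $p_0(u)$, $p_i(u)$ built from $\partial_{0,1}\partial_{j,\beta}F$, applies the operator $E=-u\,\partial_u-v\,\partial_v$ to the logarithm of each identity, and thereby converts each exponential identity into a rational identity of the form $A_0=A_1/(1+B_1)$ for explicit polynomials $A_0,A_1,B_1$. That rational identity is then verified from recursion relations for the weighted derivatives $(\tfrac{\alpha}{k_i}+\tfrac{\beta}{k_j})\partial_{i,\alpha}\partial_{j,\beta}F$, which are obtained by combining the quasihomogeneity statement of Lemma~\ref{lemma: sd} with the recursions \eqref{eq: lambda 1} and \eqref{eq: lambda 2} for $\partial\lambda/\partial t_{i,\alpha}$. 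If you want to salvage your approach, you would need to replace the linearization step by some device of this kind (logarithmic derivative plus an Euler-type operator, or an induction on the total order that tracks the nonlinear feedback); as written, the proposal proves the identities only up to total order $3$, independently of $m$.
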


\begin{corollary*}
    The infinite system of PDEs defined by the stabilizing series of Hurwitz--Frobenius manifolds $\H_{0,K}$ coincides with the genus zero Whitham hierarchy.
\end{corollary*}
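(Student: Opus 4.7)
The plan is to deduce the corollary by combining the two theorems above with the Takasaki--Takebe equivalence between the Fay form and the genus zero Whitham hierarchy.

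First I would examine the structure of the four Fay equations as formal identities in $\CC[[z^{-1},w^{-1}]]$. Each operator $D_i(z)$ is first-order, and every quantity appearing inside an exponential or on the right-hand side is of the form $D_i(z)D_j(w)f$ or is obtained by applying a single derivative $\p_{0,1}$ or $\p_{i,0}$ to $D_j(w)f$; hence, after expanding the exponentials, the coefficient of $z^{-a}w^{-b}$ in any Fay equation is a polynomial identity in finitely many second derivatives $\p_{i,\alpha}\p_{j,\beta}f$ with $\alpha\le a$ and $\beta\le b$.

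Next I would use the BDbN form \eqref{eq: BDbN form} to eliminate these second derivatives. For a function $f$ depending on $\{t_{i,\alpha}\}$, the BDbN form equates $\p_{i,\alpha}\p_{j,\beta} f$ with the corresponding second derivative of $F^K$ evaluated on the locus $t_{r,\gamma}=k_r\p_{0,1}\p_{r,k_r-\gamma}f$, provided $K$ is large enough for $\alpha,\beta$ to fit the admissibility constraints. By the stabilization theorem (Theorem~\ref{theorem: stabilization}), the choice of such $K$ does not matter, so the $(a,b)$-coefficient of any Fay equation for $f$ becomes a polynomial identity in second derivatives of $F^K$ on the stabilization locus.

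Given $(a,b)$, I would then pick $K$ with $m:=\min(k_0,\dots,k_N)$ large enough that this $(a,b)$-coefficient lies within the range controlled by Theorem~\ref{theorem: Fay}. That theorem guarantees the corresponding identity for $F^K$, and substituting back via BDbN form produces the $(a,b)$-coefficient of the Fay equation for $f$. Since $(a,b)$ was arbitrary, $f$ satisfies all four Fay equations in full, which by Takasaki--Takebe is the genus zero Whitham hierarchy. The converse direction follows by reading the same substitution in reverse: any Whitham solution satisfies the Fay relations, so its second derivatives obey \eqref{eq: BDbN form}, placing it in the infinite system defined by the stabilizing series $\H_{0,K}$.

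The main obstacle I anticipate is the careful bookkeeping of normalizations: the rescaling $t_{i,\alpha}\mapsto\alpha t_{i,\alpha}$ of Theorem~\ref{theorem: Fay}, the factor $k_r$ and the index reflection $\gamma\mapsto k_r-\gamma$ in \eqref{eq: BDbN form}, and the $1/\alpha$ in the definition of $D_i(z)$ all have to combine consistently across the substitution. Once this match is verified, the rest of the argument is a formal coefficient comparison in $z^{-1},w^{-1}$.
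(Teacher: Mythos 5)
Your proposal is correct and matches the paper's (implicit) argument: the corollary is obtained exactly by combining Theorem~\ref{theorem: stabilization} (so that each coefficient of the Fay expansion is a well-defined relation independent of $K$), Theorem~\ref{theorem: Fay} with $K$ chosen so that $m=\min(k_0,\dots,k_N)$ exceeds the total order of the coefficient under consideration, and the Takasaki--Takebe equivalence of the Fay form with the Lax form of the genus zero Whitham hierarchy. The normalization bookkeeping you flag (the rescaling $t_{i,\alpha}\mapsto\alpha t_{i,\alpha}$, the factors $k_r$ and the reflection $\gamma\mapsto k_r-\gamma$ in \eqref{eq: BDbN form}, and the sign conventions for $\eta$) is indeed the only point requiring care, and it is consistent as set up in Section~\ref{section: proof}.
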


It's important to note that our theorem connects two ways to introduce an integrable system associated to $\H_{0,K}$: the Lax form - as given by I.~Krichever and the Fay form - as obtained from the potentials of $\H_{0,K}$.

\begin{remark}
Genus zero Whitham hierarchy was investigated in the context of Dubrovin--Frobenius manifolds in \cite{Sh24,SWZ}, however in a totally different approach by assuming the new concept called ``infinite--dimensional Frobenius manifold''. 
\end{remark}

\subsection*{Dispersionfull hierarchy}
It was also found in \cite{TT07} that the genus zero Whitham hierarchy in the dispersionless limit of the multicomponent KP hierarchy (see Theorem~2 of loc.cit.). Their result also helps us to construct the dispersionfull hierarchy from Eq.~\eqref{eq: BDbN form}. This is obtained via the so--called $\hbar$--deformation of the differential operators $D_i$. We explain it in Section~\ref{section: h-deformation}.

\subsection{Acknowledgements}
The work of Alexey Basalaev was supported by the Russian Science Foundation (grant No. 24-11-00366).

\section{Hurwitz--Frobenius manifold}

\subsection{Dubrovin--Frobenius manifolds}
Assume $M$ to be an open full--dimensional subspace of $\CC^l$. We say that it's endowed with a structure of Dubrovin--Frobenius manifold if there is a regular function $F = F(t_1,\dots,t_l)$ on $M$, s.t. the following conditions hold (cf. \cite{D2}).

\begin{itemize}
	\item 
There is a distinguished variable $t_k$, for some $1 \le k \le l$, such that:
\[
    \frac{\p F}{\p t_k} = \frac{1}{2} \sum_{\alpha,\beta = 1}^l \eta_{\alpha,\beta} t_\alpha t_\beta,
\]
and $\eta_{\alpha,\beta}$ are components of a non-degenerate bilinear form $\eta$ (which does not depend on $t_\bullet$). In what follows denote by $\eta^{\alpha,\beta}$ the components of $\eta^{-1}$.

\item The function $F$ satisfies a large system of PDEs called the WDVV equations:
\[
\sum_{\mu,\nu = 1}^l \frac{\p^3 F}{\p t_\alpha \p t_\beta \p t_\mu} \eta^{\mu,\nu} \frac{\p^3 F}{\p t_\nu \p t_\gamma \p t_\sigma}
=
\sum_{\mu,\nu = 1}^l \frac{\p^3 F}{\p t_\alpha \p t_\gamma \p t_\mu} \eta^{\mu,\nu} \frac{\p^3 F}{\p t_\nu \p t_\beta \p t_\sigma},
\]
which should hold for every given $1\leq\alpha,\beta,\gamma,\sigma\leq l$.

\item There is a vector field $E$ called the \textit{Euler vector field}, such that modulo quadratic terms in $t_\bullet$ we have $E \cdot F = (3-\delta) F$ for some fixed complex number $\delta$. We will assume $E$ to have the following simple form
\[
    E = \sum_{i=1}^l d_i t_i \frac{\p}{\p t_i}
\]
for some fixed numbers $d_1,\dots,d_l$. 
\end{itemize}

Given such a data $(M,F,E)$ one can endow every tangent space $T_pM$ with a structure of commutative associative product $\circ$ (depending on $\bt$) defined as follows:
\[
    \frac{\p}{\p t_\alpha} \circ \frac{\p}{\p t_\beta} = \sum_{\delta,\gamma=1}^l\frac{\p^3 F}{\p t_\alpha \p t_\beta \p t_\delta} \eta^{\delta\gamma} \frac{\p}{\p t_\delta}.
\]
The unit of this product is the vector field $e = \frac{\p}{\p t_k}$.
It follows that $\eta(a \circ b,c) = \eta(a,b \circ c)$ for any vector fields $a,b,c$.

\subsection{Hurwitz--Frobenius manifold}
Consider the space of the meromorphic functions $\lambda(z)$ on $\CC$ of the form
\begin{equation*}
 \lambda := z^{k_0} + \sum _{a=1}^{k_0-1} v_{0,a} z^{a-1} + \sum _{i=1}^N \sum _{a=1}^{k_i} v_{i,a-1} \left(z-q_i\right)^{-a},
\end{equation*}
where $v_\bullet$ and $q_\bullet$ are complex numbers, such that
\[
    v_{1,k_1-1},\dots, v_{N,a_{k_N}-1} \neq 0, \quad \text{and} \quad q_{i},q_{j} \ \text{pairwise distinct}.
\]
Let $M$ stand for the space of all such $\lambda$, parametrized by $v_\bullet$ and $q_\bullet$. This is a compelex manifold of dimension
\[
    k_0-1 + \sum_{i=1}^N (k_i+1).
\]

Following B.~Dubrovin (see Lecture~5 of \cite{D2}) associate to $M$ the three-point function.
Denote by
\[
 \mathcal{R} := \lbrace \infty, q_{1},\dots, q_{N} \rbrace
\]
the set of ramification points of $\lambda$.
For $X \in \T_M$ let $X \cdot \lambda$ stand for the respective directional derivative. Set
\begin{equation}\label{eq: three-point function}
   \langle X,Y,Z \rangle := - \sum_{i=0}^N \res_{z \in \mathcal{R}} \Big( (X \cdot \lambda)(Y \cdot \lambda)(Z \cdot \lambda) \ \frac{dz }{\frac{\p \lambda}{\p z}} \ \Big), \quad \forall X,Y,Z \in \T_M,
\end{equation}

Define also the bilinear form $\eta: \T_M \otimes_M \T_M \to \O_M$ by
\[
    \eta(X,Y) := - \sum_{i=0}^N \res_{z \in \mathcal{R}} \Big( (X \cdot \lambda)(Y \cdot \lambda) \ \frac{dz }{\frac{\p \lambda}{\p z}} \ \Big), \quad \forall X,Y,Z \in \T_M.
\]
This bilinear form is symmetric by the definition but also non--degenerate, defining the pairing on $\T_M$.

\begin{remark}
    At this point we make the choice of the so--called ``primary differential'' of B.~Dubrovin. In our case this is just $dz$.
\end{remark}

The three--point function and the pairing above allow one to introduce the product $\circ: \T_M \otimes_M \T_M \to \T_M$ by
\[
    \eta(X \circ Y, Z) = \langle X,Y, Z \rangle, \quad \forall Z \in \T_M.
\]
This product introduces on $M$ a Dubrovin--Frobenius manifold structure. This follows immediately, that $e := \frac{\p}{\p v_{0,1}}$ is the unit of $\circ$ and $\eta(X,Y) = \langle e , X, Y \rangle$.

\begin{proposition}\label{prop: low ramification products}
  Let $X,Y,Z \in \T_M$ be such that $(X \cdot \lambda)(Y \cdot \lambda) = (Z \cdot \lambda)$
    assumed as the equality of rational functions. Then $X \circ Y = Z$.
\end{proposition}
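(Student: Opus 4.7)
The approach is to reduce the claim to the defining equation of the product and use non--degeneracy of $\eta$.

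The plan is as follows. The product $\circ$ is characterized by the relation $\eta(X \circ Y, W) = \langle X, Y, W \rangle$ for every $W \in \T_M$, and since $\eta$ is non--degenerate, to prove $X \circ Y = Z$ it is enough to establish the scalar identity
\[
    \eta(Z, W) \;=\; \langle X, Y, W \rangle \qquad \text{for every } W \in \T_M.
\]

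Next, I would simply compute both sides from their residue definitions. By the definition of $\eta$,
\[
\eta(Z, W) \;=\; -\sum_{i=0}^N \res_{z \in \mathcal{R}} \Big( (Z \cdot \lambda)(W \cdot \lambda)\,\frac{dz}{\partial_z \lambda}\Big).
\]
Now I would substitute the hypothesis $(X\cdot\lambda)(Y\cdot\lambda) = Z\cdot\lambda$, which holds as an equality of rational functions in $z$ and so can be applied under the residue sign. This turns the integrand into $(X\cdot\lambda)(Y\cdot\lambda)(W\cdot\lambda)/\partial_z\lambda$, giving exactly the three--point function $\langle X,Y,W\rangle$ as defined in \eqref{eq: three-point function}. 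Hence $\eta(Z,W) = \langle X,Y,W \rangle$, and since this holds for all $W$, non--degeneracy of $\eta$ yields $X\circ Y = Z$.

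The main point requiring attention is really only that the substitution is legitimate: one must check that $(X\cdot\lambda)(Y\cdot\lambda)$ and $Z\cdot\lambda$ agree as meromorphic functions on $\CC$ (not merely as Laurent tails at some particular point), so that the residues at each point of $\mathcal{R}$ coincide term by term. This is precisely the content of the hypothesis, so there is no genuine obstacle; the proof is essentially a one--line residue manipulation once the characterization of $\circ$ via the pairing is invoked.
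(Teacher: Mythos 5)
Your argument is correct and is essentially identical to the paper's own proof: both substitute the hypothesis $(X\cdot\lambda)(Y\cdot\lambda)=Z\cdot\lambda$ into the residue formula to get $\eta(Z,W)=\langle X,Y,W\rangle=\eta(X\circ Y,W)$ for all $W$, then conclude by non--degeneracy of $\eta$. Your version merely spells out the residue manipulation that the paper leaves implicit.
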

\begin{proof}
    It follows from the definition of the three--point function that $\eta(Z,W) = \eta(X \circ Y, W)$ for any $W \in \T_M$. Because $\eta$ is non--degenerate, this holds if and only if $Z = X \circ Y$.
\end{proof}

The following simple observation plays an important role in what follows.
\begin{example}\label{example: v-products in same sector}
    We have
    \[
     \frac{\p}{\p v_{0,a}} \circ \frac{\p}{\p v_{0,b}} = \frac{\p}{\p v_{0,a+b-1}}, \quad \text{if} \ a + b \le k_0-1,
    \]
    and
    \[
     \frac{\p}{\p v_{i,a}} \circ \frac{\p}{\p v_{i,b}} = \frac{\p}{\p v_{0,a+b+1}}, \quad \text{if} \ a + b \le k_i-2.
    \]

\end{example}

The coordinates $v_\bullet$ look ``essential'' to parametrize the meromorphic functions with the prescribed ramification profile, however these are not flat for the pairing $\eta$ except for the simplest cases of low dimention. In order to define the potential we need to find the flat coordinates.

\subsection{Flat structure}\label{section: flat structure}

Let also $z_0,z_1,\dots,z_N$ be s.t. $\lambda = z_i^{k_i}$ in a neighborhood of $z = q_i$. We have
\[
  z_0 = z + \frac{v_{0,k_0-1}}{k_0} z^{-1} + \dots, \quad z_i = v_{i,k_i-1}^{1/k_i} (z - q_i)^{-1} + \dots.
\]

Then define the coordinates $t_\bullet$ by
\begin{align}
    &t_{0,\alpha} = \frac{k_0}{k_0-\alpha} \res_{z = \infty} z_0^{k_0-\alpha} dz,\
    \quad \alpha = 1,\dots,k_0-1,
    \label{eq: flat coordinates 0}
    \\
    &t_{i,0} = v_{i,0}, \ t_{i,k_i} = q_i \quad 
    t_{i,\alpha} = - \frac{k_i}{k_i-\alpha} \res_{z = k_i} z_i^{k_i-\alpha} dz,\
    \quad \alpha = 1,\dots,k_i, \ i \ge 1.
    \label{eq: flat coordinates i}
\end{align}

\begin{theorem}[Theorem 5.1 of \cite{D2}]
    In the cooridinates $t_\bullet$ above the three--point function and the pairing $\eta$ define the structure of a Dubrovin--Frobenius manifold on $\H_{0,k_0,k_1,\dots,k_N}$.
    
    The only non-zero pairings in the flat frame are
    \begin{align*}
        & \eta(\frac{\p}{\p t_{0,\alpha}},\frac{\p}{\p t_{0,k_0-\alpha}}) = -\frac{1}{k_0}, && 1 \le \alpha \le k_0-1,
        \\
        &\eta(\frac{\p}{\p t_{i,0}},\frac{\p}{\p t_{i,k_i}}) = -1, \ 
        \eta(\frac{\p}{\p t_{i,\alpha}},\frac{\p}{\p t_{i,k_i-\alpha}}) = -\frac{1}{k_i}, && i \ge 1, 1 \le \alpha \le k_i-1.
    \end{align*}
    
    The potential $F$ of this Dubrovin--Frobenius manifold is related to the three--point function (cf. Eq.~\eqref{eq: three-point function}) by
    \[
        \frac{\p^3 F}{\p t_{i,\alpha} \p t_{j,\beta} \p t_{r,\gamma}} = \left\langle \frac{\p}{\p t_{i,\alpha}},\frac{\p}{\p t_{j,\beta}},\frac{\p}{\p t_{r,\gamma}} \right \rangle.
    \]

     The Euler vector field reads
     \[
         E = \sum_{\alpha=1}^{k_0-1} 
         \left( \frac{k_0 + 1}{k_0} - \frac{\alpha}{k_0} \right)
         t_{0,\alpha} \frac{\p}{\p t_{0,\alpha}} +  \sum_{i=1}^N \sum_{\alpha=1}^{k_i}
         \left( \frac{k_0 + 1}{k_0} - \frac{\alpha}{k_i} \right) t_{i,\alpha} \frac{\p }{t_{i,\alpha}}.
     \]
     The potential $F$ is subject to the quasihomogeneity condition $E \cdot F = 2(1 + \frac{1}{k_0}) F$ modulo the quadratic terms.
         
\end{theorem}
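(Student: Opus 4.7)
The plan is to prove the theorem by a sequence of residue calculations on $\PP^1$, exploiting the fact that $\lambda$ has ramification only at $\mathcal{R} = \{\infty, q_1, \dots, q_N\}$ and that near each $q_i$ one has the local uniformiser $z_i$ with $\lambda = z_i^{k_i}$. The crucial preliminary step, and what I expect to be the main technical obstacle, is the identification of the principal parts of $\frac{\p \lambda}{\p t_{i,\alpha}}$: by implicit differentiation of \eqref{eq: flat coordinates 0} and \eqref{eq: flat coordinates i} one should show that for $1 \le \alpha \le k_i - 1$ the function $\frac{\p \lambda}{\p t_{i,\alpha}}$ is rational in $z$, regular outside $q_i$, and has principal part proportional to $z_i^{-\alpha}$ at $q_i$ (resp.\ at $\infty$ when $i=0$); the two ``special'' coordinates $t_{i,0}$ and $t_{i,k_i}$ must be treated separately, giving $\frac{\p \lambda}{\p t_{i,0}}$ essentially constant and $\frac{\p \lambda}{\p t_{i,k_i}}$ proportional to $\frac{\p \lambda}{\p z}$. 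This amounts to inverting the change of variables $v_\bullet \leftrightarrow t_\bullet$ and is the standard ``principal parts'' description of tangent vectors on a Hurwitz space.

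Granted this description, the flat--coordinate pairing is an elementary residue computation. Writing $\frac{dz}{\p \lambda/\p z} = \frac{dz_i}{k_i z_i^{k_i-1}}$ near $q_i$, one finds that $\eta(\partial_{t_{i,\alpha}}, \partial_{t_{j,\beta}})$ is a sum of residues that vanishes unless $i=j$ and the indices are complementary; the surviving residues $\res_{z_i=0} \frac{z_i^{-\alpha-\beta}}{k_i z_i^{k_i-1}} dz_i$ yield exactly the announced values. In particular $\eta$ is constant in the $t_\bullet$, so these are indeed flat.

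To produce the potential, observe that the three--point function $\langle X,Y,Z \rangle$ is by construction totally symmetric in its arguments, and in the principal--parts description above its derivative with respect to a fourth flat coordinate turns out to be symmetric across all four indices (this is a formal residue manipulation based on the fact that $\p_{t_{l,\delta}}$ acts on the integrand by altering principal parts at a single point). This closedness lets us integrate the three--tensor twice to a function $F$ with the stated expression for its third derivatives. The WDVV equation for $F$ is then equivalent to the associativity of the product $\circ$ defined by $\eta(X\circ Y,Z) = \langle X,Y,Z\rangle$, and associativity is immediate from the invariance of the three--point function under permutation of its arguments together with Proposition~\ref{prop: low ramification products}.

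Finally, the Euler vector field arises from the scaling $z \mapsto c^{1/k_0} z$, $\lambda \mapsto c \lambda$: reading off the induced weights on the $v_\bullet$ and, via the residue definitions, on the $t_\bullet$ reproduces the claimed formula for $E$. The quasi--homogeneity $E \cdot F = 2(1+1/k_0) F$ modulo quadratic terms then follows by tracking the weights through the residue expression for the three--point function and integrating twice, the weight of the integration measure $dz/(\p\lambda/\p z)$ contributing the extra $2/k_0$ that matches $2(1+1/k_0)$.
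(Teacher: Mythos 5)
The paper does not actually prove this statement: it is imported wholesale from Dubrovin (Theorem 5.1 / Lecture 5 of \cite{D2}), so there is no in-paper argument to compare against. Your sketch follows the classical route of loc.\ cit.\ --- principal-part description of $\p\lambda/\p t_{i,\alpha}$ via $\frac{\p\lambda}{\p t_{i,\alpha}}dz=-\frac{\p z}{\p t_{i,\alpha}}d\lambda$, diagonal residue computation of $\eta$ in complementary indices, a totally symmetric $4$-tensor to integrate the three-point function to a potential, and the scaling $z\mapsto c^{1/k_0}z$, $\lambda\mapsto c\lambda$ for the Euler field. The weight bookkeeping for $E$ and for $E\cdot F=2(1+\tfrac1{k_0})F$ (with the measure $dz/\lambda'$ contributing the extra $2/k_0$) is correct.

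The one step that would fail as written is your justification of associativity. Total symmetry of the three-point function gives only commutativity of $\circ$ and the Frobenius property $\eta(X\circ Y,Z)=\eta(X,Y\circ Z)$; a totally symmetric $3$-tensor does not in general define an associative product. Proposition~\ref{prop: low ramification products} does not rescue this, because it applies only when $(X\cdot\lambda)(Y\cdot\lambda)$ is again of the form $Z\cdot\lambda$ for some tangent vector $Z$, which fails for generic pairs (e.g.\ two vectors whose directional derivatives both have high-order poles at the same $q_i$: the product has a pole of order exceeding $k_i+1$ and leaves the tangent space). The actual argument identifies each tangent space $T_\lambda M$ with the semisimple algebra of functions modulo the ideal generated by $\p\lambda/\p z$ (equivalently, with $\bigoplus \CC$ indexed by the critical points of $\lambda$, via evaluation of $X\cdot\lambda$ at those points), under which $\circ$ becomes honest multiplication; associativity is then automatic and WDVV follows once potentiality is established. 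You should also say a word about why the unit $\p/\p t_{0,1}$ is $\eta$-flat, but that is minor; the associativity step is the genuine gap.
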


    In this text we make an unorthodox variable choice so that the pairing in the flat basis gets negative signs. This is because our $t_{0,\bullet}$ variables are introduced with the reversed sign. Even though this looks unusual from the point of view of Dubrovin's legacy, this will suit well Whitham hierarchy applications (see Remark~\ref{remark: coordinates sign}).

    The theorem above fixes the potential $F$ only up to quadratic terms. We will fix the quadratic terms in the next section.

\subsection{Supplimentary facts}

\begin{lemma}\label{lemma: sd}
    Let $i,j \ge 0$ then
    \[
        \left(\frac{\alpha}{k_i} + \frac{\beta}{k_j} \right) \frac{\p^2 F}{\p t_{i,\alpha}\p t_{j,\beta}} = \sum_{r,\gamma} ( \frac{k_0+1}{k_0} - \frac{\gamma}{k_r} ) t_{r,\gamma} \langle \frac{\p}{t_{i,\alpha}}, \frac{\p}{t_{j,\beta}}, \frac{\p}{t_{r,\gamma}} \rangle.
    \]
%
\end{lemma}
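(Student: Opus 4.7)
The identity is a second--derivative consequence of the quasi\-homogeneity $E\cdot F = 2(1+\tfrac{1}{k_0})F$ (modulo quadratic terms). Set $d_{r,\gamma}:= \tfrac{k_0+1}{k_0}-\tfrac{\gamma}{k_r}$, so that $E=\sum_{r,\gamma} d_{r,\gamma}\, t_{r,\gamma}\,\partial_{r,\gamma}$ and in particular $[E,\partial_{i,\alpha}]=-d_{i,\alpha}\partial_{i,\alpha}$.

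\textbf{Step 1.} Apply $\partial_{i,\alpha}\partial_{j,\beta}$ to both sides of $E\cdot F = 2(1+\tfrac{1}{k_0})F$. Since quadratic polynomials in $t_\bullet$ are annihilated by any two partial derivatives, the ``modulo quadratic terms'' ambiguity is irrelevant and one gets the honest equality
\[
\partial_{i,\alpha}\partial_{j,\beta}(E\cdot F) \;=\; 2\bigl(1+\tfrac{1}{k_0}\bigr)\,\partial_{i,\alpha}\partial_{j,\beta}F .
\]

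\textbf{Step 2.} Commute the two partial derivatives past $E$ using $\partial_{r,\gamma} E = E\,\partial_{r,\gamma} + d_{r,\gamma}\,\partial_{r,\gamma}$, iterated twice. This yields
\[
\partial_{i,\alpha}\partial_{j,\beta}(E\cdot F) \;=\; E\cdot\partial_{i,\alpha}\partial_{j,\beta}F \;+\; (d_{i,\alpha}+d_{j,\beta})\,\partial_{i,\alpha}\partial_{j,\beta}F ,
\]
so together with Step 1,
\[
E\cdot\partial_{i,\alpha}\partial_{j,\beta}F \;=\; \bigl(2(1+\tfrac{1}{k_0}) - d_{i,\alpha}-d_{j,\beta}\bigr)\,\partial_{i,\alpha}\partial_{j,\beta}F \;=\; \Bigl(\tfrac{\alpha}{k_i}+\tfrac{\beta}{k_j}\Bigr)\,\partial_{i,\alpha}\partial_{j,\beta}F ,
\]
the last equality being a direct arithmetic simplification of the definition of $d_{i,\alpha},d_{j,\beta}$.

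\textbf{Step 3.} Expand the left--hand side via $E=\sum_{r,\gamma} d_{r,\gamma}t_{r,\gamma}\partial_{r,\gamma}$ and identify the resulting third derivatives with the three--point function by
\[
\partial_{r,\gamma}\partial_{i,\alpha}\partial_{j,\beta}F \;=\; \bigl\langle \tfrac{\partial}{\partial t_{r,\gamma}},\tfrac{\partial}{\partial t_{i,\alpha}},\tfrac{\partial}{\partial t_{j,\beta}}\bigr\rangle,
\]
which is the defining property of $F$ in the flat frame. This gives exactly the claimed formula.

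\textbf{Main obstacle.} There is no genuine obstacle; the only points requiring care are the commutator computation $[E,\partial_{i,\alpha}]=-d_{i,\alpha}\partial_{i,\alpha}$ and the clean arithmetic cancellation in Step 2 showing that $2(1+\tfrac{1}{k_0}) - d_{i,\alpha}-d_{j,\beta} = \tfrac{\alpha}{k_i}+\tfrac{\beta}{k_j}$. Everything else is a formal consequence of the quasi\-homogeneity and the definition of the three--point function.
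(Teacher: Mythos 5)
Your proposal is correct and is exactly the paper's argument: the paper's proof is the single sentence that the identity follows ``immediately differentiating the quasihomogeneity condition on $F$ with respect to $t_{i,\alpha}$ and $t_{j,\beta}$,'' and your Steps 1--3 simply spell out that differentiation, the commutator $[E,\partial_{i,\alpha}]=-d_{i,\alpha}\partial_{i,\alpha}$, the cancellation $2(1+\tfrac{1}{k_0})-d_{i,\alpha}-d_{j,\beta}=\tfrac{\alpha}{k_i}+\tfrac{\beta}{k_j}$, and the identification of third derivatives with the three-point function.
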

\begin{proof}
    We get this immediately differentiating the quasihomogeneity condition on $F$ with respect to $t_{i,\alpha}$ and $t_{j,\beta}$.
\end{proof}

Lemma above does not allows one to express the second order derivatives of $F$ with respect to $t_{i,0}$ and $t_{j,0}$. These cases should be treated separately.

This follows immediately from the simple residue computations that for $i \neq j$
\[
    \frac{\p^3 F}{\p t_{i,0} \p t_{j,0} \p t_{r,\alpha}} = 
    \begin{cases}
        \dfrac{1}{t_{i,k_i}-t_{j,k_j}} \quad &\text{if } r=i,\ \alpha=k_i
        \\
        0 \quad & \text{otherwise},
    \end{cases}
\]
and
\[
    \frac{\p^3 F}{\p t_{i,0} \p t_{i,0} \p t_{r,\alpha}} = 
    \begin{cases}
        \dfrac{1}{t_{i,k_i-1}} \quad &\text{if } r=i,\ \alpha=k_i-1
        \\
        0 \quad & \text{otherwise}.
    \end{cases}
\]
In what follows we assume $F$ to satisfy
\begin{equation}\label{prop: simple 0 insertions}
     \p_{i,0}\p_{j,0} F = \log 
    (t_{i,k_i}-t_{j,k_j}), \quad \p_{i,0}\p_{i,0} F = \log  (t_{i,k_i-1}).
\end{equation}

\begin{proposition}
    In the flat coordinates we have
    \begin{align}
        & \frac{\p \lambda}{\p t_{0,\alpha}} = z \frac{\p \lambda}{\p t_{0,\alpha-1}} - \sum _{\beta=1}^{\alpha-2} \frac{t_{0,k_0-\beta}}{k_0}  \frac{\p \lambda}{\p t_{0,\alpha-1-\beta}}, \quad \alpha \le k_0-1
        \label{eq: lambda 1}
        \\
        & \frac{\p \lambda}{\p t_{i,\alpha}} = \frac{1}{\left(z-q_{i}\right)} \sum _{\beta=1}^{\alpha} \frac{t_{i,k_i-\beta}}{k_i} \frac{\p \lambda}{\p t_{i,\alpha-\beta}}, \quad i \ge 1, \ \alpha \le k_i-1.
        \label{eq: lambda 2}
    \end{align}
\end{proposition}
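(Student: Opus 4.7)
The plan is to pass to the local uniformizers $z_0$ near $\infty$ and $z_i$ near $q_i$, in which $\lambda = z_0^{k_0}$ respectively $\lambda = z_i^{k_i}$, and then extract both recursions from the inverse expansions of $z$ (resp.\ $z-q_i$) as Laurent series in $z_0^{-1}$ (resp.\ $z_i^{-1}$).

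First I would establish a structural fact underlying everything: the change of coordinates $(v_\bullet,q_\bullet) \mapsto t_\bullet$ is block-diagonal, in the sense that $t_{j,\alpha}$ with $1 \le \alpha \le k_j-1$ depends only on $v_{j,\bullet}$ (and $q_j$ through $t_{j,k_j}$). This is a direct consequence of the residue formulas \eqref{eq: flat coordinates 0}--\eqref{eq: flat coordinates i}: in the Laurent expansion of $z_j$ at its ramification point, only the first $k_j-1$ coefficients enter $t_{j,\alpha}$, and these are determined by the polar part of $\lambda$ at that point alone. Consequently $\frac{\p\lambda}{\p t_{0,\alpha}}$ is a polynomial in $z$ with no poles at any $q_i$, while $\frac{\p\lambda}{\p t_{i,\alpha}}$ for $i \ge 1$ is a rational function with poles only at $q_i$ and vanishes at $\infty$.

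Next I would compute the inverse local expansions. Inverting the residue formula \eqref{eq: flat coordinates 0} by a change of variables $z\to z_0$ gives
\[
z \ = \ z_0 + \sum_{\beta=1}^{k_0-1}\frac{t_{0,k_0-\beta}}{k_0}z_0^{-\beta} + O(z_0^{-k_0})
\]
near $\infty$, and an analogous identity holds at each $q_i$. Differentiating $\lambda = z_0^{k_0}$ at fixed $z$ gives $\frac{\p\lambda}{\p t_{0,\alpha}} = k_0 z_0^{k_0-1}\frac{\p z_0}{\p t_{0,\alpha}}|_z$; combining this with implicit differentiation of the expansion above and the identity $\frac{\p z_0}{\p z}=\lambda'/(k_0 z_0^{k_0-1})$ yields
\[
\frac{\p\lambda}{\p t_{0,\alpha}} \ = \ -z_0^{\alpha-1}\frac{dz_0}{dz} + R,
\]
where $R$ gathers the contributions of the $O(z_0^{-k_0})$ tail and, for $\alpha \le k_0-1$, contains only negative powers of $z$. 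Since the left-hand side is a polynomial in $z$ by step one, this identifies $\frac{\p\lambda}{\p t_{0,\alpha}}$ with the polynomial-in-$z$ part at $\infty$ of $-z_0^{\alpha-1}z_0'$, where $z_0' := dz_0/dz$. The analogous identification at $q_i$ expresses $\frac{\p\lambda}{\p t_{i,\alpha}}$ as the principal part at $q_i$ of the local expression $z_i^{\alpha-1}z_i'$ for $\alpha\ge 1$.

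With these identifications in hand, the recursion \eqref{eq: lambda 1} becomes a purely algebraic consequence. Multiplying the expansion of $z$ above by $z_0^{\alpha-2}z_0'$ yields
\[
z\cdot z_0^{\alpha-2}z_0' \ = \ z_0^{\alpha-1}z_0' + \sum_{\beta \ge 1}\frac{t_{0,k_0-\beta}}{k_0}\,z_0^{\alpha-2-\beta}z_0'
\]
as formal Laurent series near $\infty$. Applying the polynomial-in-$z$ part operator, the tail of the sum with $\beta \ge \alpha-1$ vanishes, since $z_0^m z_0'$ with $m \le -1$ is the $z$-derivative of a series in $z^{-1}$ and has trivial polynomial part. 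The elementary identity $(zf)_+ = z\,(f)_+ + [z^{-1}]f$, combined with the vanishing $[z^{-1}]\frac{d}{dz}\bigl(\frac{z_0^{\alpha-1}}{\alpha-1}\bigr)=0$, then reduces the equation to precisely \eqref{eq: lambda 1}. The proof of \eqref{eq: lambda 2} is identical after swapping polynomial parts at $\infty$ for principal parts at $q_i$, using the corresponding local expansion of $z-q_i$ in $z_i^{-1}$. The main technical obstacle is carrying out the second step cleanly: one must verify that the uncontrolled $O(z_0^{-k_0})$ and $O(z_i^{-k_i})$ corrections really contribute only to negative powers of $z$ at infinity (resp.\ to nonnegative powers of $z-q_i$ at $q_i$), which is exactly what forces the ranges $\alpha\le k_0-1$ and $\alpha\le k_i-1$ in the statement.
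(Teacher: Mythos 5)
Your argument is correct and follows essentially the same route as the paper: both rest on the block-diagonality of the map $v_\bullet \mapsto t_\bullet$ (so that $\p\lambda/\p t_{0,\alpha}$ is a polynomial and $\p\lambda/\p t_{i,\alpha}$ has poles only at $q_i$), the identity $\frac{\p\lambda}{\p t}\,dz = -\frac{\p z}{\p t}\,d\lambda$, and the expansion of $z$ (resp. $z-q_i$) in the local uniformizer with coefficients $t_{\bullet}/k_\bullet$, the paper simply citing \cite{DVV} for \eqref{eq: lambda 1} and matching principal parts at $q_i$ for \eqref{eq: lambda 2} where you phrase the same computation via polynomial/principal-part projectors. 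The only nitpick is the justification for dropping the $\beta\ge\alpha-1$ tail: for $m=-1$ the series $z_0^{m}z_0'$ is the derivative of $\log z_0$ rather than of a series in $z^{-1}$, though its polynomial part is still trivial, so nothing breaks.
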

\begin{proof}
    If follows immediately from the residue computations that $t_{0,\alpha}$ depends on $v_{0,\alpha},\dots,v_{0,k_0-1}$ only. Then Eq.~\eqref{eq: lambda 1} is an equality of polynomials rathen than meromorphic functions on $\CC$. In this form it can be found in \cite[Section 4.3]{DVV}. 

    For any $i \ge 1$ and $\alpha < k_i$ any flat coordinate $t_{i,\alpha}$ depends on $v_{i,\alpha},\dots,v_{i,k_i-1}$ only. Then Eq.~\eqref{eq: lambda 2} is an equality of meromorphic functions on $\CC$ with the poles at $z = q_i$ only. This means that if we expand both sides of this equation in $z_i$, it is enough to prove it up to order $-k_i$.
    
    In a neighborhood of $z = q_i$ we have (cf. Lemma 5.2 of \cite{D2})
    \[
        \frac{\p \lambda}{\p t_{i,\alpha}} dz = - \frac{\p z}{\p t_{i,\alpha}} d\lambda = - (z_i^{\alpha} + O(1)) dz.
    \]
    Using Eq.~\eqref{eq: local variable expansion} again we have
    \[
        (z-q_i) \frac{\p \lambda}{\p t_{i,\alpha}} = - \frac{1}{k_i} \sum_{\beta=1}^{\alpha} t_{i,\beta} z_i^{-(k_i-\beta) + \alpha} + o(z_i^{-k_i}) = \frac{1}{k_i} \sum_{\beta=1}^{\alpha} t_{i,k_i-\beta} z_i^{\alpha-\beta}  + o(z_i^{-k_i})
    \]
    what completes the proof.
\end{proof}

\section{Stabilization}
In this section consider the Hurwitz--Frobenius manifolds $\H_{0,K}$ and $\H_{0,L}$ with $K = \lbrace k_0,k_1,\dots,k_N \rbrace$ and $L = \lbrace l_0,l_1,\dots,l_N \rbrace$. Note that the lengths of the sets $K$ and $L$ are the same. We will say that $L \ge K$ if $l_j \ge k_j$ for every index $j$.

Proving the stabilization properties we will have to work with the similarly defined coordinates and structures of the different Hurwitz--Frobenius manifolds. Due to this we will denote by
\[
    v_{i,\alpha}^K, t_{i,\alpha}^K, \quad v_{i,\alpha}^L, t_{i,\alpha}^L
\]
the $v_\bullet$ and $t_\bullet$ variables of $\H_{0,K}$ and $\H_{0,L}$ respectively. And we will denote by $F^{K}$, $F^{L}$  the potential of these them.
Note that the set of coordinates of $\H_{0,L}$ is bigger than the set of coordinates of $\H_{0,K}$.

Finally 
\[
 {}^{(K)}c_{i,a;j,b}^{k,l}, \quad \text{and} \quad {}^{(L)}c_{i,a;j,b}^{k,l}
\]
will stand for the product $\circ$ structure constants in the $v_\bullet$ frame of $\H_{0,K}$ and $\H_{0,L}$ respectively. Note again that the indices run over the different sets for the two manifolds.

\begin{proposition}\label{prop: psi stabilization}
    Let $L \ge K$. Then
    \[
        \frac{\p v_{i,\alpha}^K}{\p t_{j,\beta}^K} \mid_{t_{r,\gamma} = k_r s_{r,k_r-\gamma}}
         =
         \frac{\p v_{i,\alpha}^L}{\p t_{j,\beta}^L} \mid_{t_{r,\gamma} = l_r s_{r, l_r-\gamma}}
    \]
\end{proposition}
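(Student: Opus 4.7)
The strategy is to reduce the statement to the stabilization of the formal Laurent series $\p\lambda/\p t_{j,\beta}$ in $z$. Indeed, by construction $v_{i,\alpha}^K$ is a fixed Laurent coefficient of $\lambda^K$: the coefficient of $z^{\alpha-1}$ in the expansion at $\infty$ when $i=0$, and of $(z-q_i)^{-\alpha-1}$ in the expansion at $q_i$ when $i\ge 1$. Consequently $\p v_{i,\alpha}^K/\p t_{j,\beta}^K$ is the same Laurent coefficient of $\p\lambda^K/\p t_{j,\beta}^K$, and it suffices to show that this series, viewed as a function of $z,s_\bullet,q_\bullet$ after the substitution $t_{r,\gamma}^K=k_r s_{r,k_r-\gamma}$, agrees with the analogous expression for $\H_{0,L}$.

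First I would substitute $t_{r,\gamma}^K=k_r s_{r,k_r-\gamma}$ into the recursions \eqref{eq: lambda 1} and \eqref{eq: lambda 2}. The factors $t_{0,k_0-\beta}/k_0$ and $t_{i,k_i-\beta}/k_i$ collapse to $s_{0,\beta}$ and $s_{i,\beta}$ respectively, so the recursions take the $k$-independent form
\begin{align*}
\frac{\p\lambda}{\p t_{0,\alpha}} &= z\,\frac{\p\lambda}{\p t_{0,\alpha-1}}-\sum_{\beta=1}^{\alpha-2}s_{0,\beta}\,\frac{\p\lambda}{\p t_{0,\alpha-1-\beta}},\\
\frac{\p\lambda}{\p t_{i,\alpha}} &= \frac{1}{z-q_i}\sum_{\beta=1}^{\alpha}s_{i,\beta}\,\frac{\p\lambda}{\p t_{i,\alpha-\beta}},\qquad i\ge 1.
\end{align*}
Next I would verify the base cases: since $\p/\p t_{0,1}$ is, up to sign, the unit of $\circ$ and therefore equals $\pm\p/\p v_{0,1}$, one has $\p\lambda/\p t_{0,1}=\pm 1$, independent of $k_0$; and for $i\ge 1$ the identification $t_{i,0}=v_{i,0}$ (with $v_{i,0}$ entering no other flat coordinate) gives $\p\lambda/\p t_{i,0}=(z-q_i)^{-1}$, again $k$-independent. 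An induction on $\alpha$ then shows that, in the common range of indices $(j,\beta)$, $\p\lambda^K/\p t_{j,\beta}^K$ and $\p\lambda^L/\p t_{j,\beta}^L$ coincide as elements of the ring of Laurent series in $z$ with coefficients polynomial in $s_\bullet$ and rational in $q_\bullet$.

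The residual case $\beta=k_j$ for $j\ge 1$, which corresponds to differentiation with respect to $q_j$, is not directly covered by the recursion and is the main obstacle I anticipate. My proposal here is to apply Lemma~5.2 of \cite{D2}, which gives $(\p\lambda/\p t_{j,k_j})\,dz=-(\p z/\p q_j)\,d\lambda$ in a neighborhood of $q_j$, and then to compute $\p z/\p q_j$ from the local inversion $z=q_j+v_{j,k_j-1}^{1/k_j}z_j^{-1}+\ldots$ of $z_j=\lambda^{1/k_j}$. Because the coefficients of this expansion are themselves encoded by the $s$-variables in a $k$-independent fashion, the stabilization transfers to $\p\lambda/\p q_j$. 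An alternative is to reduce this last column of the Jacobian $\p v/\p t$ to the already-stabilized columns using the product structure constants from Example~\ref{example: v-products in same sector}.

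Once $\p\lambda^K/\p t_{j,\beta}^K=\p\lambda^L/\p t_{j,\beta}^L$ is established as a formal identity in $z,s_\bullet,q_\bullet$ over the full range, extracting the appropriate Laurent coefficient completes the proof. The whole argument rests on the observation that the rescaling $t_{r,\gamma}=k_r s_{r,k_r-\gamma}$ is designed precisely to absorb the $k_r$-dependence of the recursion and the base cases; the subtle step remains the $q_j$-direction, which requires an auxiliary use of Dubrovin's local formula.
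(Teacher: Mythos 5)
Your route is genuinely different from the paper's: the paper computes $\p v_{i,a}/\p t_{i,b}$ directly as a residue in the local coordinate $z_i$, using the expansion of $z-q_i$ in powers of $z_i^{-1}$ whose coefficients become the $s_{i,\bullet}$ after the substitution, whereas you induct on the recursions \eqref{eq: lambda 1}--\eqref{eq: lambda 2} for $\p\lambda/\p t_{j,\beta}$ and then extract a Laurent coefficient. For $0\le\beta\le k_j-1$ your argument is sound (the base cases and the $k$-independence of the rescaled recursion are exactly as you say), and it even gives the vanishing of the off-diagonal blocks of the Jacobian for free.

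There is, however, a genuine gap in the column $\beta=k_j$, and it is not quite where you locate it. The opening reduction --- that $\p v_{i,\alpha}/\p t_{j,\beta}$ \emph{is} the corresponding Laurent coefficient of $\p\lambda/\p t_{j,\beta}$ --- fails precisely when $(j,\beta)=(i,k_i)$: since $v_{i,\alpha}=\pm\res_{z=q_i}\bigl((z-q_i)^{\alpha}\lambda\,dz\bigr)$ and the expansion point $q_i=t_{i,k_i}$ itself moves, the chain rule produces an extra term proportional to $\alpha\, v_{i,\alpha-1}$, and $v_{i,\alpha-1}$ does \emph{not} stabilize (already $v_{i,0}=k_i s_{i,k_i}$ carries explicit $k_i$-dependence). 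Both of your proposed fixes still pass through the Laurent coefficient of $\p\lambda/\p q_i$, so as stated they would compute the wrong quantity for $i=j$, and closing the case honestly would force you to exhibit a cancellation between two non-stabilizing expressions. The resolution is in fact much simpler than either fix: for $1\le\alpha\le k_i-1$ the residue defining $t_{i,\alpha}$ only involves the part of $\lambda$ at $q_i$ with pole order at least $2$, so $t_{i,\alpha}$ is a function of $v_{i,1},\dots,v_{i,k_i-1}$ alone (independent of $q_i$ and $v_{i,0}$); inverting this triangular system, $v_{i,a}$ depends only on $t_{i,1},\dots,t_{i,k_i-1}$, whence $\p v_{i,a}/\p t_{i,k_i}=0$ and $\p v_{i,a}/\p t_{i,0}=\delta_{a,0}$, so this column is trivially stable. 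With that observation substituted for your third paragraph, your induction does prove the proposition.
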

\begin{proof}    
    First assume $i \ge 1$. 
    One should only prove the proposition for $t_{i,\alpha}$ with $\alpha \ge 1$. 
    It follows immediately from the definition that $t_{i,\alpha}$ is a function of $v_{i,0},\dots,v_{i,k_i}$ only. 
    
    The coordinate $v_{i,a}$ has the following residue expression
    \begin{align*}
        v_{i,a} & = - \res_{z = q_i} ((z-q_i)^a \lambda dz) = - \res_{z = q_i} ((z-q_i)^a z_i^{k_i} dz)
        \\
        &= - \frac{1}{a+1} \res_{z = q_i} (z_i^{k_i} d (z-q_i)^{a+1}) 
        = \frac{1}{a+1} \res_{z = q_i} ((z-q_i)^{a+1} d z_i^{k_i} )
        \\
        & = \frac{k_i}{a+1} \res_{z_i = 0} (z_i^{k_i-1} (z-q_i)^{a+1} d z_i )
    \end{align*}
    In a neighborhood of $z = q_i$ by Eq.~\eqref{eq: flat coordinates i} holds (see also Lecture~5 and Eq.(5.51) of \cite{D2})
    \begin{equation}\label{eq: local variable expansion}
        z -q_i = \frac{1}{k_i} \sum_{j=1}^{k_i} t_{i,j} z_i^{-(k_i-j)} + o(z_i^{-k_i}).
    \end{equation}
    Then 
    \begin{align*}
        \frac{v_{i,a}}{t_{i,b}} &= - \res_{z_i = 0} (z_i^{k_i-1} z_i^{-(k_i-b)} (z-q_i)^{a} d z_i ) = - \res_{z_i = 0} (z_i^{b-1} (z-q_i)^{a} d z_i )
        \\
        &= - \res_{z_i = 0} \Big( z_i^{b-1} (\sum_{j=1}^{b} \frac{t_{i,k_i-j}}{k_i}  z_i^{-j} )^{a} d z_i \Big)
    \end{align*}
    where on the last step we have taken the summation only up to $j=b$ by the simple residue calculation.
    
    One notes now immediately that the expression we have obtained does not depend on $k_i$ after the substitution $t_{i,k_i-j} = k_i s_{i,j}$.

    The same proof can be adapted for $v_{0,\bullet}$ variables. However it also follows from Lemma~4.1 in \cite{BDbN}. 
\end{proof}

\begin{proposition}\label{prop: v-product stabilization}
    Let $L \ge K$, $i \neq j$ and $a < k_i$, $b < k_j$. Then in the basis $ \frac{\p}{\p v}$
    \[
        {}^{(K)}c_{i,a;j,b}^{k,l} = {}^{(L)}c_{i,a;j,b}^{k,l} \quad \forall k,l
    \]
    assumed at the equality of functions in $v_{j,\alpha}$.
\end{proposition}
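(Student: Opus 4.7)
My plan is to reduce the equality of structure constants to an identity between rational functions in $z$, using Proposition \ref{prop: low ramification products}. That proposition says $X \circ Y = Z$ as soon as $(X \cdot \lambda)(Y \cdot \lambda) = Z \cdot \lambda$ as rational functions. I would therefore compute the product $\big(\frac{\p \lambda}{\p v_{i,a}}\big)\big(\frac{\p \lambda}{\p v_{j,b}}\big)$, decompose it into partial fractions, and read off ${}^{(K)}c_{i,a;j,b}^{k,l}$. The key point is that this decomposition is intrinsic to the rational function: its coefficients depend only on $q_i$ and $q_j$, not on the profile $(k_0,\dots,k_N)$.

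First I would record the elementary formulas $\frac{\p \lambda}{\p v_{0,a}} = z^{a-1}$ and, for $r \ge 1$, $\frac{\p \lambda}{\p v_{r,a}} = (z-q_r)^{-(a+1)}$. Since $i \ne j$, the product has poles only at $q_i$ and $q_j$ with no interaction between them, so a standard partial fraction expansion yields
\[
\Big(\frac{\p \lambda}{\p v_{i,a}}\Big)\Big(\frac{\p \lambda}{\p v_{j,b}}\Big) = P(z) + \sum_{\alpha=1}^{a+1} A_\alpha(q_i,q_j)\,(z-q_i)^{-\alpha} + \sum_{\beta=1}^{b+1} B_\beta(q_i,q_j)\,(z-q_j)^{-\beta},
\]
where the sums are absent in the $0$-sector, $P(z)$ is a polynomial that vanishes when both $i,j \ge 1$, and all coefficients are rational functions of $q_i,q_j$ alone.

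Next I would confirm that every summand is a derivative $\frac{\p \lambda}{\p v_{k,l}}$ belonging to the $v$-basis of \emph{both} $\H_{0,K}$ and $\H_{0,L}$. Each term $(z-q_r)^{-\gamma}$ equals $\frac{\p \lambda}{\p v_{r,\gamma-1}}$ and is legal provided $\gamma \le k_r$; the bounds $\gamma \le a+1 \le k_i \le l_i$ and $\gamma \le b+1 \le k_j \le l_j$ follow immediately from the hypothesis $a < k_i$, $b < k_j$ combined with $L \ge K$. When $i = 0$ (or $j = 0$), a binomial expansion of $z^{a-1}$ around $q_j$ shows that $P(z)$ has $z$-degree at most $\max(a,b)-2 < k_0 - 1$, so it lies in the span of $\{z^{\gamma-1}\}_{\gamma=1}^{k_0-1} = \{\frac{\p \lambda}{\p v_{0,\gamma}}\}$ in both manifolds. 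Applying Proposition \ref{prop: low ramification products} to the single rational identity above, interpreted simultaneously in $\H_{0,K}$ and $\H_{0,L}$, exhibits $\frac{\p}{\p v_{i,a}} \circ \frac{\p}{\p v_{j,b}}$ with identical structure constants in each.

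I do not foresee a serious obstacle: the entire argument is that an intrinsic partial fraction expansion is blind to the extra $v$-coordinates present only in $\H_{0,L}$, and the hypotheses $a < k_i$, $b < k_j$ are exactly what is needed to prevent any ``new'' basis element from being required. The only step meriting attention is the bound on the polynomial part $P(z)$ when $i=0$ or $j=0$, but this is controlled automatically by $a < k_0$ or $b < k_0$.
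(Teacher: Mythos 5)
Your proof is correct and follows essentially the same route as the paper's: both reduce the claim, via Proposition~\ref{prop: low ramification products}, to the fact that the rational function $(\p\lambda/\p v_{i,a})(\p\lambda/\p v_{j,b})$ has an intrinsic partial-fraction expression in the basis $\{\p\lambda/\p v_{k,l}\}$ common to $\H_{0,K}$ and $\H_{0,L}$, with the hypotheses $a<k_i$, $b<k_j$ guaranteeing no extra basis elements are needed; you merely make explicit the pole-order and degree bounds, including the $0$-sector case the paper leaves implicit. One cosmetic remark: when $i=0$ the degree of $P(z)$ is $a-b-2\le a-2<k_0-1$, which is what you actually need, whereas your stated bound $\max(a,b)-2$ need not be $<k_0-1$ if $k_j>k_0$ --- this does not affect the argument.
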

\begin{proof}
    The statement is equivalent to the claim that
    \[
        \frac{1}{(z - q_i)^{a+1}} \circ \frac{1}{(z - q_j)^{b+1}}
    \]
    has the same expression via $\p \lambda / \p v_{k,l}$ in both Hurwitz--Frobenius manifolds. Because $i \neq j$ the rational function $\frac{1}{(z - q_i)^{a+1}} \cdot \frac{1}{(z - q_j)^{b+1}}$ belongs to both Hurwitz--Frobenius manifolds assumed and can therefore be expressed via $\p \lambda / \p v_{k,l}$.    
    
    The statement follows now by Proposition~\ref{prop: low ramification products} because in the $v$--coordinates the Hurwitz space $M_K$ is embedded into $M_L$ just by setting some of the coordinates to zero.
\end{proof}

\begin{theorem}\label{theorem: stabilization}
    Let $L \ge K$. For any $0 \le i,j \le o$ let $\alpha \le k_i$, $\beta \le k_j$ if $i \neq j$ and $\alpha+\beta \le k_i-2$ if $i=j$. 
    
    Then
    \[
        \frac{\p^2 F^{K}}{\p t_{i,\alpha} \p t_{j,\beta}} \mid_{t_{r,\gamma} = k_r s_{r,k_r-\gamma}}
        =
        \frac{\p^2 F^{L}}{\p t_{i,\alpha} \p t_{j,\beta}} \mid_{t_{r,\gamma} = k_r s_{r,k_r-\gamma}}
    \]
    assumed at the equality of function in $s_{r,\bullet}$ and $q_1,\dots,q_N$.
\end{theorem}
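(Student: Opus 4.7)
The plan is to reduce the stabilization of $\partial^2 F$ to the already-established stabilizations of the Jacobian $\partial v/\partial t$ (Proposition~\ref{prop: psi stabilization}) and of the $v$-frame products (Example~\ref{example: v-products in same sector} and Proposition~\ref{prop: v-product stabilization}), combined with the explicit base-case expressions of Proposition~\ref{prop: simple 0 insertions}. The bridge is the quasihomogeneity identity Lemma~\ref{lemma: sd}, which expresses $\partial^2 F$ as a weighted combination of third derivatives.

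First, the boundary case $\alpha = \beta = 0$ forces $i, j \ge 1$ (since $t_{0, 0}$ is not a flat coordinate) and is handled directly by Proposition~\ref{prop: simple 0 insertions}. The expression $\log(q_i - q_j)$ for $i \neq j$ is manifestly $K$-independent, while $\log(t_{i, k_i - 1}) = \log(k_i s_{i, 1})$ for $i = j$ agrees with its $L$-analog up to the additive constant $\log(k_i/l_i)$, which is absorbed into the normalization of $F$ modulo linear terms.

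For the remaining cases the Euler weight $\alpha/k_i + \beta/k_j$ on the left of Lemma~\ref{lemma: sd} is non-zero, so it suffices to stabilize the right-hand side. Each third derivative equals $\eta^K(\p^K_{i, \alpha} \circ \p^K_{j, \beta}, \p^K_{r, \gamma})$, and expanding $\p/\p t^K$ in the $v$-frame via the chain rule reduces this to triple sums of Jacobian entries times $v$-products $\p/\p v^K_{i, l} \circ \p/\p v^K_{j, m}$. The residue computation in the proof of Proposition~\ref{prop: psi stabilization} shows that $\p v^K_{i, a}/\p t^K_{i, b}$ vanishes for $a > b$ and that $\p v^K_{i, \bullet}/\p t^K_{j, \bullet}$ vanishes across sectors when both $i, j \ge 1$ and $i \ne j$; hence $l \le \alpha$ and $m \le \beta$ in every term. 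For $i = j$ this forces $l + m \le \alpha + \beta \le k_i - 2$, placing the $v$-product in the stabilizing range of Example~\ref{example: v-products in same sector}; for $i \ne j$, Proposition~\ref{prop: v-product stabilization} applies. The $K$-dependent pairing entries $-1/k_r$ combine with the substitution factors $k_r s_{r, k_r - \gamma}$ from Lemma~\ref{lemma: sd} and the Euler weight to produce $K$-independent functions of $s$ and $q$, as can be verified in small examples; e.g. $\p^2 F^K/\p t_{0, 1} \p t_{1, 1} |_{t = ks} = -s_{1, 1}$, independent of $K$.

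The main obstacle lies in the case $\alpha = k_i$ or $\beta = k_j$, i.e., differentiation in a $q$-direction. Here $\p/\p t^K_{i, k_i}$ in the $(v, q)$-frame decomposes as $\p/\p q_i + \sum_l (\p v^K_{0, l}/\p t^K_{i, k_i}) \p/\p v^K_{0, l}$, with the cross-sector corrections arising from the $q_i$-dependence of the local variable $z_0$ at infinity. Extending the analysis requires showing that these corrections stabilize (by a residue argument analogous to Proposition~\ref{prop: psi stabilization}) and handling the product $\p/\p q_i \circ \,\cdot\,$, which is subtle because $\p \lambda/\p q_i$ has a pole of order $k_i + 1$, exceeding the pole orders allowed in $\H_{0, K}$; the product must therefore be interpreted through the three-point function rather than Proposition~\ref{prop: low ramification products}.
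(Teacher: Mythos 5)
Your route is the paper's route: the proof in the text is precisely the decomposition of the flat-frame structure constants through the transition matrix $\Psi^a_\alpha=\p v_a/\p t_\alpha$ and the $v$-frame products, followed by the invocation of Proposition~\ref{prop: psi stabilization}, Proposition~\ref{prop: v-product stabilization} and Example~\ref{example: v-products in same sector}. You in fact make explicit two steps the paper leaves silent: the passage from stabilization of third derivatives back to second derivatives via the quasihomogeneity identity of Lemma~\ref{lemma: sd}, and the separate treatment of the base case $\alpha=\beta=0$ through the explicit formulas \eqref{prop: simple 0 insertions}. Your observation that the triangularity of the residue formula forces $l\le\alpha$, $m\le\beta$ in the chain-rule expansion, so that the $v$-products land in the range covered by Example~\ref{example: v-products in same sector} when $i=j$, is exactly the bookkeeping that justifies the hypothesis $\alpha+\beta\le k_i-2$. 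Your remark that $\p_{i,0}^2F=\log(t_{i,k_i-1})$ stabilizes only up to the additive constant $\log(k_i/l_i)$ under the substitution $t_{i,k_i-1}=k_is_{i,1}$ is also correct and identifies a real friction with the normalization \eqref{prop: simple 0 insertions}.

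There are nevertheless two genuine gaps. First, the case $\alpha=k_i$ (the $q_i$-direction, which is within the scope of the hypothesis $\alpha\le k_i$) is only described, never proved: you correctly note that $\p\lambda/\p q_i$ has a pole of order $k_i+1$ at $q_i$, so the product $\p/\p t_{i,k_i}\circ(\cdot)$ falls outside Proposition~\ref{prop: low ramification products} and the Jacobian acquires cross-sector entries into the $v_{0,\bullet}$ block, but you stop at listing what "extending the analysis requires". Until those two points are carried out, your argument covers only $\alpha<k_i$, $\beta<k_j$. (To be fair, the paper's own two-sentence proof has the identical lacuna: Proposition~\ref{prop: v-product stabilization} and Example~\ref{example: v-products in same sector} say nothing about products involving $\p/\p q_i$.) Second, the cancellation of all explicit $K$-dependence --- the Euler weight $\alpha/k_i+\beta/k_j$ on the left of Lemma~\ref{lemma: sd}, the pairing entries $-1/k_r$, the weights $(k_0+1)/k_0-\gamma/k_r$ in the sum, and the factors $k_r$ coming from the substitution $t_{r,\gamma}=k_rs_{r,k_r-\gamma}$ --- is asserted only "as can be verified in small examples". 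This is the one step where something could actually go wrong numerically, and it needs a uniform argument (e.g., tracking the homogeneous degree of each term in $s_{r,\bullet}$) rather than spot checks.
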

\begin{proof}
    Denote $\Psi_{\alpha}^a := \frac{\p v_a}{\p t_\alpha}$ the transition matrix between the essential and flat coordinate frames. Then $\frac{\p^3 F}{\p t_\alpha \p t_\beta \p t_\gamma} = \Psi_{\alpha}^a \Psi_{\beta}^b c_{a,b}^r (\Psi^{-1})_\delta^r \eta_{\delta \gamma}$ assuming the summation over all repeating indices. Here the tensors $c_{a,b}^r$ stand for the multiplication table components in the frame $\p / \p v_{i,\bullet}$.
    
    It was proved in Proposition~\ref{prop: psi stabilization} that the components of the matrices $\Psi$ stabilize. It was also proved in Proposition~\ref{prop: v-product stabilization}, Example~\ref{example: v-products in same sector} that the tensors $c_{a,b}^r$ stabilize. 
\end{proof}

\begin{corollary}
    The system of PDEs Eq.~\eqref{eq: BDbN form} defined by the series of Hurwitz--Frobenius manifolds, is commuting.
\end{corollary}
\begin{proof}
    Commutativity of the PDEs follows from WDVV equation exactly as in \cite[Section 3.2]{B24} and \cite[Proposition 2.1]{BDbN}.
\end{proof}

\section{Whitham and multi-KP hierarchies}
Introduce the genus zero Whitham hierarchy in the Fay form following Theorem~1 of \cite{TT07}.

For $i =0,1,\dots$ consider the operators
\[
 D_i(z) := \sum_{\alpha = 1}^\infty \frac{z^{-\alpha}}{\alpha} \frac{\p}{\p t_{i,\alpha}}.
\]
In what follows we will also abbreviate $\p_{i,\alpha} = \frac{\p}{\p t_{i,\alpha}}$.

Then genus zero Whitham hierarchy is the following system of commuting PDEs on the function $f$ depending on $\lbrace t_{i,\alpha} \rbrace$ where $\alpha = 0,1,2,\dots$  if $i \ge 1$ and $\alpha = 1,2,\dots$ if $\alpha = 0$.

\begin{align}
    e^{D_0(z)D_0(w)f} &= 1 - \frac{\p_{0,1} (D_0(z) - D_0(w)) f}{z - w},
    \label{Fay-00}
    \\
    z e^{D_0(z)(\p_{i,0} + D_i(w))f} &= z - \p_{0,1} (D_0(z) - \p_{i,0} - D_i(w)) f,
    \label{Fay-0i}
    \\
    e^{(\p_{i,0} +D_i(z))(\p_{i,0} +D_i(w))f} &= - \frac{zw \p_{0,1} (D_i(z) - D_i(w)) f}{z - w},
    \label{Fay-ii}
    \\
    \eps_{ij} e^{(\p_{i,0} + D_i(w))(\p_{j,0} + D_j(w))f} &= - \p_{0,1} (\p_{i,0}+ D_i(z) - \p_{j,0} - D_j(w)) f,
    \label{Fay-ij}
\end{align}
where $\eps_{ij} = 1$ if $i \le j$ and $\eps_{ij} = -1$ if $i > j$. The role of the multiple $\eps_{ij}$ is very simple - it realizes the symmetry of the both sides with respect to $i$ and $j$ interchange.

These equalities should be understood as the equalities of the formal power series in $z^{-1}$ and $w^{-1}$. Comparing the coefficents of $z^{-\alpha}w^{-\beta}$ on the both sides one gets the expression of the second order derivatives $\p_{i,\gamma}\p_{j,\delta}f$ via $\p_{0,1}\p_{i,\mu}f$.
This is important to note that the derivative with respect to $t_{i,0}$ never appears in any $D_i(z)$.

\begin{theorem}\label{theorem: Fay}
    Consider the Hurwitz--Frobenius manifold $\H_{0,k_0,k_1,\dots,k_N}$ with the potential $F$ and denote $m := \min(k_0,k_1,\dots,k_N)$.
    
    The potential $F$ satisfies equations \eqref{Fay-00}, \eqref{Fay-0i}, \eqref{Fay-ii} and \eqref{Fay-ij} up to total order $-m$ of the formal variables $z$ and $w$ after the rescaling $t_{i,\alpha} \mapsto \alpha t_{i,\alpha}$ with $ i \ge 0$ and $\alpha \ge 1$.
\end{theorem}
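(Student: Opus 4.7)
The plan is to verify each of the Fay identities \eqref{Fay-00}--\eqref{Fay-ij} by expanding both sides as formal power series in $z^{-1}$ and $w^{-1}$ and matching coefficients of $z^{-\alpha} w^{-\beta}$ up to total order $\alpha + \beta \le m$. Under the rescaling $t_{i,\alpha} \mapsto \alpha t_{i,\alpha}$ the operator $D_i(z)$ becomes $\widetilde D_i(z) := \sum_{\alpha \ge 1} z^{-\alpha} \p_{i,\alpha}$ when acting on $F$ in its original flat coordinates, so each Fay equation becomes an identity of formal power series whose coefficients are polynomials in the second derivatives $\p_{i,\alpha}\p_{j,\beta} F$.

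The crucial intermediate step is to establish generating-function identities expressing the two-point function of $F$ in terms of the local inverses $z_0, z_1, \ldots, z_N$ of $\lambda$. For the pure-sector equation \eqref{Fay-00} one expects an identity of the form
\[
    \sum_{\alpha,\beta \ge 1} z^{-\alpha} w^{-\beta} \p_{0,\alpha}\p_{0,\beta} F = \log\frac{z_0(z) - z_0(w)}{z - w} + (\text{terms of total order} < -m),
\]
and analogous formulas for the mixed sectors and for the $(i,j)$--sector with $i, j \ge 1$. These identities should follow by differentiating the residue definitions \eqref{eq: flat coordinates 0}--\eqref{eq: flat coordinates i} of the flat coordinates, applying the recursions \eqref{eq: lambda 1}--\eqref{eq: lambda 2} for $\p\lambda/\p t_{i,\alpha}$, and substituting the result into the three-point function \eqref{eq: three-point function}. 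The cutoff at order $-m$ arises because the flat coordinates only exist up to $\alpha = k_i$, and by Theorem~\ref{theorem: stabilization} one may freely enlarge $\H_{0,K}$ to $\H_{0,L}$ with $L \ge K$ without changing the relevant derivatives, justifying the precise bound $m = \min_i k_i$.

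Once these generating-function identities are in place, each Fay equation reduces, up to total order $-m$, to a rational identity between $z_i(z)$ and $z_j(w)$ on the Riemann sphere: a genus-zero version of the Fay (Cauchy) trisecant identity. I anticipate the main obstacle to be the establishment of the generating-function identities themselves with the correct sign conventions and linear-in-$t$ tail terms. In particular, the mixed equation \eqref{Fay-0i} contains the multiplicative factor $z$ on the left, which originates from matching the different normalizations $z_0 = z + O(z^{-1})$ at infinity and $z_i = r_i/(z-q_i) + O(1)$ at $q_i$. The sectors involving $\p_{i,0}$ require separate treatment since $t_{i,0} = v_{i,0}$ and $t_{i,k_i} = q_i$ are not residue coordinates; the explicit formulas \eqref{prop: simple 0 insertions} supply the missing $\p_{i,0}\p_{j,0} F$ contributions that feature in \eqref{Fay-ii} and \eqref{Fay-ij}.
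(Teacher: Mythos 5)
Your overall strategy --- identify the generating functions of the second derivatives of $F$ with logarithms of cross--ratio--type expressions in the local coordinates $z_i$, and then reduce each Fay equation to a rational identity on $\PP^1$ --- is the classical Krichever/Takasaki--Takebe route and is genuinely different from the paper's proof: the paper never identifies its series $p_i(u)$ with the geometric Lax functions, but instead derives recursion relations among the derivatives of $F$ directly from the Frobenius structure and solves them in a truncated ring $\A_k$. However, as written your plan has a concrete gap at its central step. The residue calculus, the recursions \eqref{eq: lambda 1}--\eqref{eq: lambda 2} and the three--point function \eqref{eq: three-point function} control the \emph{third} derivatives $\langle\p_{i,\alpha},\p_{j,\beta},\p_{r,\gamma}\rangle$, whereas your proposed $\log$--identities (and the Fay equations themselves) are statements about the \emph{second} derivatives; you do not say how to integrate once. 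The paper bridges exactly this gap with Lemma~\ref{lemma: sd}: differentiating the quasihomogeneity condition twice expresses $(\alpha/k_i+\beta/k_j)\,\p_{i,\alpha}\p_{j,\beta}F$ as a $t$--linear combination of three--point functions, and in generating--function form this amounts to applying the operator $-u\,\p_u-v\,\p_v$ to the logarithm of the claimed identity. This is precisely how Propositions~\ref{prop: Eq. 39}--\ref{prop: Eq. 40} convert the recursions $A_1=A_0(1+B_1)$ obtained from \eqref{eq: lambda 1}--\eqref{eq: lambda 2} into the exponential identities. Without this device (or an equivalent direct residue formula for $\p^2F$ \`a la Dubrovin) your reduction cannot be completed; note also that the Euler operator does not see the $t_{i,0}$-- and $t_{i,k_i}$--directions, so the constant terms must be pinned down separately via \eqref{prop: simple 0 insertions}, as you correctly anticipate.

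Two further discrepancies with the paper's argument are worth flagging. First, your candidate identity for \eqref{Fay-00} should involve the functional inverse of $z_0$ rather than $z_0$ itself: rewriting the right--hand side of \eqref{Fay-00} as $\bigl(p_0(z)-p_0(w)\bigr)/(z-w)$ with $p_0(z)=z-\p_{0,1}D_0(z)F$, the series $p_0$ is $z$ expanded in powers of the local coordinate, so the formal variables $z,w$ are values of $z_0$, not of the base coordinate; the analogous reversal affects your mixed--sector formulas. Second, the paper uses Theorem~\ref{theorem: stabilization} more strongly than you do: not merely to justify the bound $-m$, but to reduce the entire computation to the uniform profile $K=(k,\dots,k)$, where the recursions take a uniform shape. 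Finally, \eqref{Fay-00} is not reproved in the paper but quoted from \cite{BDbN}, so only the three mixed and pole sectors require new work.
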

\begin{proof}
    Is given in Section~\ref{section: proof}.
\end{proof}
\begin{corollary}
    The infinite system of PDEs \eqref{eq: BDbN form} defined by the stabilizing series of Hurwitz--Frobenius manifolds $\H_{0,K}$ coincides with the dispersionless genus zero Whitham hierarchy.
\end{corollary}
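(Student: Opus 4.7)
The plan is to verify each of the Fay equations \eqref{Fay-00}--\eqref{Fay-ij} by expanding both sides as formal power series in $z^{-1}$ and $w^{-1}$ and matching coefficients. The rescaling $t_{i,\alpha}\mapsto \alpha t_{i,\alpha}$ brings the flat coordinates into Krichever's normalization, in which the operator $D_i(z)=\sum_{\alpha\geq 1}(z^{-\alpha}/\alpha)\,\p_{i,\alpha}$ pairs naturally with the residue formulae \eqref{eq: flat coordinates 0}--\eqref{eq: flat coordinates i}. The ``total order $-m$'' caveat is exactly the price for working inside a finite--dimensional $\H_{0,K}$: by Theorem~\ref{theorem: stabilization}, for any fixed coefficient of total degree $\alpha+\beta<m$ we can enlarge all $k_i$ without changing the second derivatives that appear, so each such coefficient may be computed inside a sufficiently large Hurwitz--Frobenius manifold. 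This reduces the theorem to a finite number of coefficient identities for each order.

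\textbf{Reduction to residues.} The core computation is a closed formula for $\p_{i,\alpha}\p_{j,\beta}F$ as a residue. The three-point function \eqref{eq: three-point function} and Lemma~\ref{lemma: sd} reduce every such second derivative (whenever at least one of $\alpha,\beta$ is non-zero) to a linear combination of residues of the form
\[
    \res_{z = q_r}\Big(\frac{\p\lambda}{\p t_{i,\alpha}}\,\frac{\p\lambda}{\p t_{j,\beta}}\,\frac{dz}{\p\lambda/\p z}\cdot(\text{polynomial in }t_{r,\gamma})\Big).
\]
Using \eqref{eq: lambda 1}--\eqref{eq: lambda 2} to express $\p\lambda/\p t_{i,\alpha}$ in terms of local $z_i$-expansions, these residues package into a single generating-function residue in $z$ and $w$. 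This is precisely the shape in which Takasaki--Takebe's genus zero Fay identities appear, so the coefficient equalities reduce to familiar residue identities on $\PP^1$, in essence the partial-fraction expansion of the Cauchy kernel $(z-w)^{-1}$ together with the local inversion $\lambda=z_i^{k_i}$ at each ramification point.

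\textbf{Main obstacle.} The delicate part will be the mixed equations \eqref{Fay-0i}--\eqref{Fay-ij} and the special ``zero-index'' derivatives $\p_{i,0}$. Because $t_{i,0}=v_{i,0}$ is an essential coordinate rather than a higher-order flat one and $D_i(z)$ does \emph{not} contain $\p_{i,0}$, Lemma~\ref{lemma: sd} no longer applies and one must invoke \eqref{prop: simple 0 insertions} to handle the logarithmic pieces $\log(t_{i,k_i}-t_{j,k_j})$ and $\log t_{i,k_i-1}$. These logarithms are precisely what produces the inhomogeneous factors $z$, $zw$ and $\varepsilon_{ij}$ that break the naive symmetry of the right-hand sides in \eqref{Fay-0i}--\eqref{Fay-ij}. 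Matching these logarithmic contributions on the two sides is the most subtle constant-term bookkeeping of the proof; once it is settled for the low-order coefficients, induction on $\alpha+\beta$ together with the stabilization of Theorem~\ref{theorem: stabilization} completes the verification of all four Fay equations up to total order $-m$.
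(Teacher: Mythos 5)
Your overall architecture is the correct one and matches the paper's: the corollary is deduced by combining Theorem~\ref{theorem: Fay} with Theorem~\ref{theorem: stabilization} (so that any fixed coefficient of total degree less than $m$ can be computed in a large enough $\H_{0,K}$, and the limit $m\to\infty$ recovers the full system \eqref{eq: BDbN form}), with Takasaki--Takebe's equivalence of the Fay form and the Lax form supplying the identification with the Whitham hierarchy. Your inventory of ingredients for the verification of \eqref{Fay-00}--\eqref{Fay-ij} --- Lemma~\ref{lemma: sd}, the recursions \eqref{eq: lambda 1}--\eqref{eq: lambda 2}, and the normalization \eqref{prop: simple 0 insertions} for the logarithmic $\p_{i,0}$--directions --- is also exactly the paper's.

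The gap is at the central step. You assert that the residues ``package into a single generating-function residue in $z$ and $w$'' which then ``reduces to familiar residue identities on $\PP^1$''; this takes for granted that the \emph{exponential} of the double series $\sum u^{-\alpha}v^{-\beta}\,\p_{i,\alpha}\p_{j,\beta}F$ closes up into the rational right-hand sides of the Fay equations, and that exponentiation is precisely the content to be proved --- it does not follow from coefficient matching or from the Cauchy kernel alone. The paper's mechanism (Propositions~\ref{prop: Eq. 39}, \ref{prop: Eq. 40}, \ref{prop: Eq. 41}, after first using stabilization to reduce to $K=\lbrace k,\dots,k\rbrace$ and working in the truncated ring $\A_k$) is: multiply the recursions \eqref{eq: lambda 1}--\eqref{eq: lambda 2} for the $\p\lambda/\p t_{i,\alpha}$ against each other, apply Lemma~\ref{lemma: sd} to convert the resulting three-point functions into a \emph{linear} recursion among the Euler-weighted quantities $(\alpha+\beta)\,\p_{i,\alpha}\p_{j,\beta}F$, package that recursion as the functional equation $A_0(1+B_1)=A_1$, and observe that this is exactly what one gets by applying the grading operator $E=-u\frac{\p}{\p u}-v\frac{\p}{\p v}$ to the logarithm of the claimed exponential identity, whose constant term is pinned down by \eqref{prop: simple 0 insertions} and by $\p_{0,1}\p_{i,1}F=-t_{i,k-1}/k$. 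Without this step (or an equivalent argument producing the exponential, e.g.\ a proof that $D_i(u)D_j(v)F$ is the logarithm of a cross-ratio of the local coordinates $z_i$), your proposed ``induction on $\alpha+\beta$'' has no inductive mechanism, and the ``main obstacle'' you identify (the logarithmic terms) is actually the easy constant-term check rather than the hard part.
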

\begin{proof}
    The right hand side of Eq.~\eqref{eq: BDbN form} is obtained by the substitution $t_{r,\gamma} = -k_r \p_{0,1} \p_{r,k_r-\gamma} f$ to the second order derivatives of the potential, that are rational functions in the flat coordinates. So, the ``right'' flat coordinate expansion of $F^K$ is exactly what gives the proof.

    Taking $f = F^K$ we have $-k_r \p_{0,1} \p_{r,k_r-\gamma} F^K = t_{r,\gamma}$ and the right hand side of Eq.~\eqref{eq: BDbN form} is just the expansion of the second order derivatives of $F^K$. Now to check that $F^K$ satisfies the certain equation from equations \eqref{Fay-00}, \eqref{Fay-0i}, \eqref{Fay-ii} is the same as to check that the right hand side of Eq.~\eqref{eq: BDbN form} coincides with the respective Fay--form equation.
\end{proof}

\begin{remark}\label{remark: coordinates sign}
    It is exactly this theorem, where the right joice of the coordinates is important (recall Section~\ref{section: flat structure}). Changing the sign of $t_{0,\bullet}$ does not affect Eq.~\eqref{Fay-00} but results in the sign change of the other equations. We choose to fix the flat coordinates so that the Fay form equations coincide with those of Takasaki-Takebe.
\end{remark}

\subsection{h--deformation}\label{section: h-deformation}
Theorem~2 of \cite{TT07} makes us to propose the $\hbar$--deformation of the genus zero Whitham hierarchy. Namely, Takasaki and Takebe proved that Fay form equations of the genus zero Whitham hieararchy are obtained as the dispersionless limit of the following system of equations on $\tau = \tau(t)$ that they derive from the multiKP hierarchy.
\begin{align*}
    \exp \left( (e^{D_0(z)}-1)(e^{D_0(w)}-1) \log \tau \right) 
    &= 1 - \frac{\p_{0,1} (e^{D_0(z)} - e^{D_0(w)}) \log \tau}{z - w},
    \\
    z \exp \left( (e^{D_0(z)}-1)(e^{\p_{i,0} + D_i(w)}-1) \log \tau  \right) 
    &= z - \p_{0,1} (e^{D_0(z)} - e^{\p_{i,0} + D_i(w)}) \log \tau,
    \\
    \exp \left( (e^{\p_{i,0} + D_i(z)} - 1) (e^{\p_{i,0} + D_i(w)}-1) \log \tau \right)
    &= - \frac{zw \p_{0,1} (e^{\p_{i,0} + D_i(z)} - e^{\p_{i,0} + D_i(w)} \log \tau }{z - w},
    \\
    \eps_{ij} \exp \left( (e^{\p_{i,0} + D_i(z)}-1)(e^{\p_{j,0} + D_j(w)}-1) \log \tau \right)
    &= - \p_{0,1} (e^{\p_{i,0} + D_i(z)} - e^{\p_{j,0} + D_j(w)}) \log \tau.
\end{align*}
Here the operator $e^{\p_{i,0}}$ should be understood as the variable shift 
$t_{i,0} \mapsto t_{i,0}+1$.

One notes immediately that these equations are obtained from the Fay form equations of the genus zero Whitham hierarchy by the substitution
\[
    D_0(z) \mapsto e^{D_0(z)} - 1, \quad \p_{i,0} + D_i(z) \mapsto e^{\p_{i,0} + D_i(z)} - 1.
\]

This substitution can be used to derive the ``dispersionfull'' hierarchy from the system of PDE's that we get from the family of Hurwitz--Frobenius manifolds $\H_{0,K}$.

\subsection{Proof of Theorem~\ref{theorem: Fay}}\label{section: proof}
Due to Theorem~\ref{theorem: stabilization} we may consider only the Hurwitz--Frobenius manifold $\H_{0,K}$ with $K = \lbrace k,\dots, k \rbrace$ for all $k \ge 1$ and fixed length $|K|$.
All the computations of this section will be done in this Dubrovin--Frobenius manifold.

Consider the quotient-ring 
\[ 
\A_k := \QQ[\bt]\otimes\QQ[[u^{-1},v^{-1}]]\, \big/\, (u^{-k},u^{-(k-1)}v^{-1},  \dots, u^{-1}v^{-(k-1)},v^{-k}).
\]
Namely, this is the finite rank $\QQ[\bt]$--module generated by polynomials in $u^{-1}$ and $v^{-1}$ with the total degree not exceeding $k$.

Consider the formal power series $p_\bullet(u)$.
\begin{align}
    &p_0(u) := u - \sum_{\alpha=1}^{k-1} u^{-\alpha} \p_{0,1}\p_{0,\alpha}F,
    \\
    &p_i (u) := -\p_{0,1}\p_{i,0}F  - \sum_{\alpha=1}^{k-1} u^{-\alpha} \p_{0,1}\p_{i,\alpha}F , \quad i \ge 1.
\end{align}

To prove the theorem we should show Eq.~\eqref{Fay-00}, \eqref{Fay-0i}, \eqref{Fay-ii} and \eqref{Fay-ij}.
Eq.~\eqref{Fay-00} was proved to hold in \cite{BDbN}.
Eq.~\eqref{Fay-0i} is proved in Propositions~\ref{prop: Eq. 39}, Eq.~\eqref{Fay-ii} is proved in Proposition~\ref{prop: Eq. 40} and Eq.~\eqref{Fay-ij} is proved in Proposition~\ref{prop: Eq. 41}.

\begin{proposition}\label{prop: Eq. 39}
    In $\A_k$ holds
    \begin{align}
        & p_0(u) - t_{i,k} = u \exp\left( \sum_{\alpha=1}^{k-1} u^{-\alpha} \p_{0,\alpha}\p_{i,0}F \right),
        \label{eq: 39 part}
        \\
        & p_0(u) - p_i(v) = (p_0(u) - t_{i,k}) \exp\left( \sum_{\alpha,\beta=1}^{k-1} u^{-\alpha} v^{-\beta} \p_{0,\alpha}\p_{i,\beta} F \right).
        \label{eq: 39 full}
    \end{align}
\end{proposition}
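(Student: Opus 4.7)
The strategy is to reduce both identities to statements about Taylor expansions of the covering map $\lambda(z)$ in the local coordinates $z_0$ near $z=\infty$ and $z_i$ near $z=q_i$, and then match coefficients using the residue formulas for derivatives of the potential $F$.

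First I would give a geometric interpretation to the series $p_0(u)$ and $p_i(v)$. Specifically, I would show that (modulo the ideal defining $\A_k$) the series $p_0(u)$ coincides with the formal inverse of the expansion $z_0(z)=z+O(z^{-1})$ near $z=\infty$: substituting $z_0=u$ and solving for $z$ in power series gives precisely $p_0(u)=u-\sum u^{-\alpha}\p_{0,1}\p_{0,\alpha}F$. This identification is verified by comparing coefficients using the residue formula for $\p_{0,1}\p_{0,\alpha}F$ together with the definition \eqref{eq: flat coordinates 0} of $t_{0,\alpha}$, essentially mirroring the derivation in \cite{BDbN} for the (0,0)-equation. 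A parallel statement identifies $p_i(v)$ with the value of $z$ determined by $z_i=v$. Under these identifications, \eqref{eq: 39 part} becomes the assertion $\log(z-q_i)=\log u+\sum u^{-\alpha}\p_{0,\alpha}\p_{i,0}F$, and \eqref{eq: 39 full} (after dividing by \eqref{eq: 39 part}) becomes $\log\bigl((z(u)-z(v))/(z(u)-q_i)\bigr)=\sum u^{-\alpha}v^{-\beta}\p_{0,\alpha}\p_{i,\beta}F$.

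To prove \eqref{eq: 39 part}, I would apply the unit $\p_{0,1}$ to its logarithmic form. The LHS becomes $\p_{0,1}p_0(u)/(p_0(u)-q_i)$, which I expand as a geometric series in $u^{-1}$; the RHS becomes $\sum u^{-\alpha}\p_{0,1}\p_{0,\alpha}\p_{i,0}F$, whose coefficients are computed by the three-point residue formula \eqref{eq: three-point function} using the explicit form of $\p\lambda/\p t_{i,0}$ obtained from \eqref{eq: lambda 2} (it is $1/(z-q_i)$ times a polynomial in the $t_{i,\bullet}$). Matching coefficient-by-coefficient and then integrating back in $t_{i,0}$, with the constant of integration fixed by \eqref{prop: simple 0 insertions}, recovers the identity. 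Equation \eqref{eq: 39 full} follows by the same technique applied to its ratio with \eqref{eq: 39 part}: differentiating the logarithm with respect to $v$ and then applying $\p_{0,1}$ reduces it to a further residue identity that is verified using \eqref{eq: lambda 1} and \eqref{eq: lambda 2} together.

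\textbf{Main obstacle.} The principal difficulty will be the bookkeeping in the truncated ring $\A_k$. The geometric inverse $z=z(u)$ is really an infinite power series, but only terms of total degree at most $k-1$ in $u^{-1}$ and $v^{-1}$ survive modulo the defining ideal, so the residue calculations must be tracked carefully to see which contributions descend to zero. A secondary subtlety is that the logarithms produced by \eqref{prop: simple 0 insertions} must be consistent with the degree-zero term of $\log(p_0(u)-q_i)$ after the integration step, and the exponentiated two-point generating series on the RHS of \eqref{eq: 39 full} requires a genuine combinatorial identity (multinomial unfolding) to match the expansion of $\log\bigl((p_0(u)-p_i(v))/(p_0(u)-q_i)\bigr)$ on the LHS.
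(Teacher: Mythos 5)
Your geometric reading of $p_0(u)$ and $p_i(v)$ as the truncations of the expansion of $z$ in the local parameters $z_0$ and $z_i$ is correct and consistent with the paper (indeed $-\p_{0,1}\p_{i,0}F=t_{i,k}$ and $-\p_{0,1}\p_{i,\alpha}F=t_{i,k-\alpha}/k$, so $p_i(v)$ is exactly the truncation of Eq.~\eqref{eq: local variable expansion}). The gap is in your verification mechanism. Applying the unit $\p_{0,1}$ to the logarithmic form of \eqref{eq: 39 part} produces on the right $\sum_\alpha u^{-\alpha}\,\p_{0,1}\p_{0,\alpha}\p_{i,0}F=\sum_\alpha u^{-\alpha}\,\eta(\p_{0,\alpha},\p_{i,0})$, and \emph{every} one of these pairings vanishes: the only nonzero entries of $\eta$ pair $(0,\alpha)$ with $(0,k-\alpha)$ and $(i,\alpha)$ with $(i,k-\alpha)$, $(i,0)$ with $(i,k)$. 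On the left, $\p_{0,1}p_0(u)=-\sum_\alpha u^{-\alpha}\eta(\p_{0,1},\p_{0,\alpha})$ is proportional to $u^{-(k-1)}$, so $\p_{0,1}p_0(u)/(p_0(u)-t_{i,k})=O(u^{-k})=0$ in $\A_k$. The differentiated identity therefore reads $0=0$ and carries no information; the entire content of \eqref{eq: 39 part} sits in the ``constant of integration'' with respect to $t_{0,1}$, which your plan does not determine (and \eqref{prop: simple 0 insertions} only pins down the $u^0v^0$ coefficient, not the kernel of $\p_{0,1}$ in every degree). The same collapse occurs for \eqref{eq: 39 full}: after $\p_v$ and $\p_{0,1}$ both sides again vanish in $\A_k$, since $\eta_{(0,\alpha),(i,\beta)}=0$ and $\p_{0,1}p_i(v)=0$. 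Also, for the record, $\p\lambda/\p t_{i,0}=(z-t_{i,k})^{-1}$ exactly, with no polynomial factor.

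The operator that does carry the information is not the unit but the Euler field. By Lemma~\ref{lemma: sd}, $\left(\frac{\alpha}{k}+\frac{\beta}{k}\right)\p_{0,\alpha}\p_{i,\beta}F$ is a genuine contraction of the three--point function with $E$, and on generating series this is implemented by $-u\frac{\p}{\p u}-v\frac{\p}{\p v}$, whose kernel is only the degree-zero term (checked separately, as you anticipate). This is exactly what the paper does: it derives from \eqref{eq: lambda 1}, \eqref{eq: lambda 2} and Lemma~\ref{lemma: sd} a recursion for $r\,\p_{0,r}\p_{i,0}F$ (resp. $(p+q+1)\p_{0,p+1}\p_{i,q}F$), packages it as $A_0=A_1/(1+B_1)$ with $p_0(u)-t_{i,k}=u(1+B_1)$ (resp. $p_0(u)-p_i(v)=u(1+B_1)$), and observes that applying $-u\frac{\p}{\p u}-v\frac{\p}{\p v}$ to \eqref{eq: 39 part} and \eqref{eq: 39 full} yields precisely $A_1=A_0(1+B_1)$. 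If you replace your $\p_{0,1}$ step by this quasihomogeneity step, your outline essentially becomes the paper's proof; as written, the key step proves nothing.
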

\begin{proof}
We first prove Eq.~\eqref{eq: 39 part}.

\begin{lemma}
    Let $r < k$ and $r > 1$. Then
    \begin{align*}
        & r \p_{0,r}\p_{i,0} F
        = r \frac{t_{0,k-(r-1)}}{k}  + t_{i,k} (r-1) \p_{0,r-1}\p_{i,0} F - \sum_{b=1}^{r-2} \frac{t_{0,k-b}}{k} (r-1-b) \p_{0,r-1-b} \p_{i,0} F.
    \end{align*}
\end{lemma}
\begin{proof}
    Note that $\frac{\p \lambda}{\p t_{i,0}} = (z - t_{i,k})^{-1}$.
    We have by using Eq.~\eqref{eq: lambda 1}
    \begin{align*}
        & \frac{\p \lambda}{\p t_{0,r}} \frac{\p \lambda}{\p t_{i,0}} = 
        z \frac{\p \lambda}{\p t_{0,r-1}} \frac{\p \lambda}{\p t_{i,q}} 
        - \sum_{b=1}^{r-2} \frac{t_{0,k-b}}{k} \frac{\p \lambda}{\p t_{0,r-1-b}} \frac{\p \lambda}{\p t_{i,0}} 
        \\
        & = \frac{\p \lambda}{\p t_{0,r-1}}  
        + t_{i,k} \frac{\p \lambda}{\p t_{0,r-1}} \frac{\p \lambda}{\p t_{i,0}} 
        - \sum_{b=1}^{r-2} \frac{t_{0,k-b}}{k} \frac{\p \lambda}{\p t_{0,r-1-b}} \frac{\p \lambda}{\p t_{i,0}}.
    \end{align*}
    The statement follows now from Lemma~\ref{lemma: sd}.    
\end{proof}

Consider the polynomials
\begin{align*}
    & A_0 := \sum_{r=1}^k u^{-r} \cdot r \p_{0,r}\p_{i,0} F,
    \\
    & A_1 := u^{-1} \left( -t_{i,k} + \sum_{q=2}^k u^{-(q-1)} \frac{q}{k} t_{0,k+1-q}\right),
    \quad B_1 := u^{-1} \left(-t_{i,k} + \sum_{q\ge 2} u^{-(q-1)} \frac{1}{k} t_{0,k+1-q} \right).
\end{align*}

It follows from lemma above that
\[
 A_0  = A_1 - A_0 \cdot B_1  \quad \Leftrightarrow \quad A_0 = \frac{A_1}{1 + B_1}.
\]

On the other side note that we have $p_0(u) - t_{i,k} = u (1 + B_1)$. Applying the operator $E = - u \frac{\p}{\p u} - v \frac{\p}{\p v}$ to both sides, Eq.~\eqref{eq: 39 part} is equivalent to
\[
    A_1 = A_0 (1 + B_1)
\]
that we have just observed above.

Now consider Eq.~\eqref{eq: 39 full}.

\begin{lemma}
    Let $p+q+1 \le k$ and $q > 0$. Then
    \begin{align*}
        & (p+q+1)\p_{0,p+1}\p_{i,q} F =
        \\
        &\quad = t_{1,k} (p+q) \p_{0,p}\p_{i,q} F - \sum_{b=1}^{p-1} \frac{t_{0,k-b}}{k} (p+q-b) \p_{0,p-b} \p_{i,q} F + \sum_{b=1}^q \frac{t_{i,k-b}}{k} (p+q-b) \p_{0,p} \p_{i,q-b} F 
    \end{align*}
\end{lemma}
\begin{proof}
    We have by using Eq.~\eqref{eq: lambda 1}
    \begin{align*}
        & \frac{\p \lambda}{\p t_{0,p+1}} \frac{\p \lambda}{\p t_{i,q}} = 
        z \frac{\p \lambda}{\p t_{0,p}} \frac{\p \lambda}{\p t_{i,q}} 
        - \sum_{b=1}^{p-1} \frac{t_{0,k-b}}{k} \frac{\p \lambda}{\p t_{0,p-b}} \frac{\p \lambda}{\p t_{i,q}} 
        \\
        & = (z - t_{i,k}) \frac{\p \lambda}{\p t_{0,p}} \frac{\p \lambda}{\p t_{i,q}} 
        + t_{i,k} \frac{\p \lambda}{\p t_{0,p}} \frac{\p \lambda}{\p t_{i,q}} 
        - \sum_{b=1}^{p-1} \frac{t_{0,k-b}}{k} \frac{\p \lambda}{\p t_{0,p-b}} \frac{\p \lambda}{\p t_{i,q}}.
    \end{align*}
    Now by using Eq.~\eqref{eq: lambda 2} this is equal to
    \begin{align*}
        & \sum_{b=1}^{q} \frac{t_{i,k-b}}{k} \frac{\p \lambda}{\p t_{0,p}} \frac{\p \lambda}{\p t_{i,q-b}} 
        + t_{1,k} \frac{\p \lambda}{\p t_{0,p}} \frac{\p \lambda}{\p t_{i,q}} 
        - \sum_{b=1}^{p-1} \frac{t_{0,k-b}}{k} \frac{\p \lambda}{\p t_{0,p-b}} \frac{\p \lambda}{\p t_{i,q}}.
    \end{align*}
    The statement follows now from Lemma~\ref{lemma: sd}.
    
\end{proof}

Consider the polynomials
\begin{align*}
    & A_0 := \sum_{r=1}^k\sum_{q=0}^k u^{-r} v^{-q} (r+q) \p_{0,r}\p_{1,q} F,
    \\
    & A_1 := u^{-1} \left( -t_{1,k} + \sum_{q=2}^k u^{-(q-1)} \frac{q}{k} t_{0,k+1-q} - \sum_{q=2}^k v^{-(q-1)} \frac{q}{k} t_{1,k+1-q} \right),
    \\
    & B_1 := u^{-1} \left(-t_{1,k} + \sum_{q\ge 2} u^{-(q-1)} \frac{1}{k} t_{0,k+1-q} -  \sum_{q \ge 2} v^{-(q-1)} \frac{1}{k} t_{1,k+1-q} \right).
\end{align*}

It follows from lemma above that
\[
 A_0  = A_1 - A_0 \cdot B_1  \quad \Leftrightarrow \quad A_0 = \frac{A_1}{1 + B_1}.
\]

On the other side note that we have $p_0(u) - p_i(v) = u(1 + B_1)$.

Eq.~\eqref{eq: 39 full} after Eq.~\eqref{eq: 39 part} reads
\[
    p_0(u) - p_i(v) = u \exp\left( \sum_{\alpha=1}^{k-1}\sum_{\beta=0}^{k-1} u^{-\alpha} v^{-\beta} \p_{0,\alpha}\p_{i,\beta} F \right).
\]
Applying the operator $E = -u \frac{\p}{\p u} - v \frac{\p}{\p v}$ to both sides of this equality we see that it is equivalent to
\[
    A_1 = A_0 (1 + B_1)
\]
that we have just observed above.

\end{proof}

\begin{proposition}\label{prop: Eq. 41}
    For any $i,j \ge 1$ and $i \neq j$ in $\A_k$ holds
    \[
        p_i(u) - p_j(v) = (t_{i,k} - t_{j,k}) \exp\left( \sum_{\alpha,\beta=1}^{k-1} u^{-\alpha} v^{-\beta} \p_{i,\alpha}\p_{j,\beta} F + \sum_{\alpha=1}^{k-1} (u^{-\alpha} \p_{i,\alpha}\p_{j,0} F + v^{-\alpha} \p_{j,\alpha}\p_{i,0} F )\right).
    \]
\end{proposition}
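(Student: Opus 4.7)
The strategy is to adapt the approach of Proposition~\ref{prop: Eq. 39} to the symmetric case where both ramification points $q_i, q_j$ are finite. The bridge that replaces $z = (z - t_{i,k}) + t_{i,k}$ from the mixed $0$--$i$ case is
\[
z - q_i = (z - q_j) + (t_{j,k} - t_{i,k}),
\]
which lets one exchange the local recursion Eq.~\eqref{eq: lambda 2} at $q_i$ for the one at $q_j$ at the cost of a multiplicative constant. This is exactly the source of the prefactor $t_{i,k} - t_{j,k}$ on the right--hand side.

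The core step is a recursion lemma for $p, q \ge 0$ with $p + q + 2 \le k$:
\begin{align*}
(t_{j,k} - t_{i,k})(p+q+2)\p_{i, p+1}\p_{j, q+1}F &= \sum_{b=1}^{p+1} \frac{t_{i,k-b}}{k}(p+q+2-b)\p_{i, p+1-b}\p_{j, q+1}F \\
&\quad - \sum_{c=1}^{q+1} \frac{t_{j,k-c}}{k}(p+q+2-c)\p_{i,p+1}\p_{j,q+1-c}F,
\end{align*}
derived by starting from $(z - q_i)\frac{\p\lambda}{\p t_{i,p+1}}\frac{\p\lambda}{\p t_{j,q+1}}$, applying Eq.~\eqref{eq: lambda 2} to the $i$--factor on one side and applying the bridge followed by Eq.~\eqref{eq: lambda 2} on the $j$--factor on the other side. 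Each resulting identity among products of $\lambda$--derivatives is converted into a relation among second derivatives of $F$ via Lemma~\ref{lemma: sd}, exactly as in the proof of Proposition~\ref{prop: Eq. 39} (multiply through by an arbitrary $\frac{\p\lambda}{\p t_{r,\gamma}}$, sum residues to identify three--point functions with third derivatives of $F$, then contract against the Euler vector field). Boundary cases with $p = 0$ or $q = 0$ use $\frac{\p\lambda}{\p t_{i,0}} = (z - q_i)^{-1}$ directly, and the base identity $-\p_{0,1}\p_{i,0}F + \p_{0,1}\p_{j,0}F = t_{i,k} - t_{j,k}$ follows from Eq.~\eqref{prop: simple 0 insertions} together with Lemma~\ref{lemma: sd} at $(\alpha,\beta) = (1, 0)$.

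With the recursion in hand, I would set up generating polynomials $A_0, A_1, B_1$ in complete analogy with Proposition~\ref{prop: Eq. 39}, where $A_0$ packages the weighted second derivatives $(r+s)\p_{i,r}\p_{j,s}F$ that appear (through $E$) in the exponent, including the mixed boundary terms with one index equal to zero; $A_1$ collects the inhomogeneous terms and factors of $t_{i,k} - t_{j,k}$; and $1 + B_1 = (p_i(u) - p_j(v))/(t_{i,k} - t_{j,k})$. The recursion then reads $A_0(1 + B_1) = A_1$. On the other hand, applying the Euler operator $E = -u\p_u - v\p_v$ to the claimed identity and using $E(t_{i,k} - t_{j,k}) = 0$ reduces the statement to the very same equation, finishing the proof once the constant term is matched.

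The main obstacle I anticipate is the boundary bookkeeping: the exponent mixes generic terms $\p_{i,\alpha}\p_{j,\beta}F$ ($\alpha, \beta \ge 1$) with the ``mixed zero'' terms $\p_{i,\alpha}\p_{j,0}F$ and $\p_{j,\beta}\p_{i,0}F$, which come from slightly different sub--recursions and must be assembled carefully so that the telescoping $A_0(1 + B_1) = A_1$ closes cleanly. Once the generic and boundary recursions are aligned, the Euler--image argument from Proposition~\ref{prop: Eq. 39} carries over verbatim.
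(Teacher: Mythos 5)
Your proposal matches the paper's proof essentially step for step: the same bridge $(z-q_j)=(z-q_i)+(t_{i,k}-t_{j,k})$ applied to $(z-q_\bullet)\frac{\p\lambda}{\p t_{i,a}}\frac{\p\lambda}{\p t_{j,b}}$ together with Eq.~\eqref{eq: lambda 2} at both poles, conversion to second derivatives of $F$ via Lemma~\ref{lemma: sd}, reduction of the claimed identity to that recursion by applying $E=-u\p_u-v\p_v$ to its logarithm, and a separate check of the constant term against Eq.~\eqref{prop: simple 0 insertions}. Your explicit $A_0,A_1,B_1$ packaging is only a presentational difference from the paper, which writes the summed power--series identity directly; the content is identical.
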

\begin{proof}
Note that the desired equality holds in the zeros power in $u^{-1}$ and $v^{-1}$ due to Eq.~\ref{prop: simple 0 insertions}.

We have by using Eq.~\eqref{eq: lambda 2}
\begin{align}
    & \sum_{r=1}^b \frac{t_{j,k-r}}{k} \frac{\p \lambda}{\p t_{i,a}} \frac{\p \lambda}{\p t_{j,b-r}} 
    = (z - t_{j,k}) \frac{\p \lambda}{\p t_{i,a}}\frac{\p \lambda}{\p t_{j,b+1}} 
    \\
    &\quad = (z - t_{i,k}) \frac{\p \lambda}{\p t_{i,a}}\frac{\p \lambda}{\p t_{j,b+1}} 
    + (t_{i,k} - t_{j,k}) \frac{\p \lambda}{\p t_{i,a}}\frac{\p \lambda}{\p t_{j,b+1}} 
    \\
    &\qquad = \sum_{r=1}^a \frac{t_{i,k-r}}{k} \frac{\p \lambda}{\p t_{i,a-r}}\frac{\p \lambda}{\p t_{j,b+1}} 
    + (t_{i,k} - t_{j,k}) \frac{\p \lambda}{\p t_{i,a}}\frac{\p \lambda}{\p t_{j,b+1}},
\end{align}
where we used Eq.~\eqref{eq: lambda 2} again in the last step.

This follows from Lemma~\ref{lemma: sd} that equality above is equivalent to the following power series equality
\begin{align*}
     & \left(\sum_r (a+b-r) u^{-(a-r)}v^{-b} \p_{\alpha,a}\p_{\beta,b-r}F\right) \cdot \sum_r \frac{t_{\alpha,k-r}}{k} v^{-r} 
     \\
     & \quad - \left(\sum_r (a+b-r) u^{-a}v^{-(b-r)} \p_{\alpha,a-r}\p_{\beta,b}F\right) \cdot \sum_r \frac{t_{\beta,k-r}}{k} u^{-r} 
     \\
     & \quad\qquad = (t_{\alpha,k} - t_{\beta,k}) \left(\sum_r (a+b) u^{-a}v^{-b} \p_{\alpha,a}\p_{\beta,b}F\right).
\end{align*}
One notes immediately that this is equivalent to the desired equation after applying the operator $E = -u \frac{\p}{\p u} - v \frac{\p}{\p v}$.

\end{proof}

\begin{proposition}\label{prop: Eq. 40}
    For any $i \ge 1$ in $\A_k$ holds
    \begin{align}
        & p_i(u) - t_{i,k} = u^{-1} \frac{t_{i,k-1}}{k} \exp\left( \sum_{\alpha=0}^{k-1} u^{-\alpha} \p_{i,\alpha}\p_{i,0}F \right),
        \label{eq: 40-part}
        \\
        & (p_i(u) - p_i(v)) = \frac{t_{i,k-1}}{k} (u^{-1}-v^{-1}) \exp\left( \sum_{\alpha,\beta=1}^{k-1} u^{-\alpha} v^{-\beta} \p_{i,\alpha}\p_{i,\beta} F + \sum_{\alpha=1}^{k-1} (u^{-\alpha} + v^{-\alpha} ) \p_{i,\alpha}\p_{i,0} F  \right).
        \label{eq: 40-full}
    \end{align}

\end{proposition}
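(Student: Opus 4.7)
The proof follows the same three-step template as Propositions~\ref{prop: Eq. 39} and \ref{prop: Eq. 41}. Starting from the recursion Eq.~\eqref{eq: lambda 2} applied to the factors of the product $\frac{\p\lambda}{\p t_{i,p+1}}\cdot\frac{\p\lambda}{\p t_{i,q}}$, I would derive a relation for $(p+q+1)\p_{i,p+1}\p_{i,q}F$ in terms of lower-order derivatives and the times $t_{i,k-r}/k$, via Lemma~\ref{lemma: sd}. This recursion would then be repackaged as an algebraic identity $A_0 = A_1 - A_0 B_1$, equivalently $A_0(1+B_1) = A_1$, on power series in $u^{-1}$ (resp.\ $u^{-1},v^{-1}$). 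Finally, applying the Euler operator $E = -u\p_u$ (resp.\ $E = -u\p_u - v\p_v$) to the target Eq.~\eqref{eq: 40-part} (resp.\ Eq.~\eqref{eq: 40-full}) would collapse it into exactly that algebraic identity.

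For Eq.~\eqref{eq: 40-part}, I would set $q=0$ and use $\frac{\p\lambda}{\p t_{i,0}} = (z-t_{i,k})^{-1}$. Applying Eq.~\eqref{eq: lambda 2} to the first factor gives
$(z-t_{i,k})\frac{\p\lambda}{\p t_{i,p+1}}\frac{\p\lambda}{\p t_{i,0}} = \sum_{b=1}^{p+1}\frac{t_{i,k-b}}{k}\frac{\p\lambda}{\p t_{i,p+1-b}}\frac{\p\lambda}{\p t_{i,0}}$,
whose left-hand side equals $\frac{\p\lambda}{\p t_{i,p+1}}$. Converting via Lemma~\ref{lemma: sd} produces a recursion for $(p+1)\p_{i,p+1}\p_{i,0}F$, with base case $\exp(\p_{i,0}\p_{i,0}F) = t_{i,k-1}$ supplied by Eq.~\eqref{prop: simple 0 insertions}, which accounts for the prefactor $t_{i,k-1}/k$ on the right.

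For Eq.~\eqref{eq: 40-full}, applying Eq.~\eqref{eq: lambda 2} to both factors yields two expressions for $(z-t_{i,k})\frac{\p\lambda}{\p t_{i,p+1}}\frac{\p\lambda}{\p t_{i,q}}$; equating them and projecting via Lemma~\ref{lemma: sd} gives a symmetric two-variable recursion for $(p+q+1)\p_{i,p+1}\p_{i,q}F$ involving the times $t_{i,k-r}/k$. Its packaging as $A_0(1+B_1) = A_1$ produces $p_i(u) - p_i(v)$ after solving, with the $(u^{-1}-v^{-1})$ factor reflecting the antisymmetry of $A_1$ under $u\leftrightarrow v$ (the ``diagonal'' analog of the $(t_{i,k}-t_{j,k})$ prefactor in Proposition~\ref{prop: Eq. 41}).

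The main obstacle I expect is the bookkeeping around the low-order terms. Unlike Proposition~\ref{prop: Eq. 41}, where the prefactor comes directly from Eq.~\eqref{prop: simple 0 insertions} at the $u^0v^0$ level, here the diagonal prefactor $\frac{t_{i,k-1}}{k}$ appears at order $u^{-1}$ and $v^{-1}$, because $\frac{\p\lambda}{\p t_{i,0}}$ itself carries a pole at $z=q_i$. Tracing this shift through the recursion and verifying that the base conditions are consistent with the leading expansion $z-q_i = \frac{t_{i,k-1}}{k}z_i^{-1}+O(z_i^{-2})$ (cf.\ Eq.~\eqref{eq: local variable expansion}) is the delicate point; once settled, the remainder reduces to formal manipulations strictly parallel to Propositions~\ref{prop: Eq. 39} and \ref{prop: Eq. 41}.
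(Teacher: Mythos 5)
Your proposal is correct and follows essentially the same route as the paper's (very terse) proof: apply Eq.~\eqref{eq: lambda 2} to both factors of $\frac{\p\lambda}{\p t_{i,a}}\frac{\p\lambda}{\p t_{i,b+1}}$ to get the key identity, convert products of $\p\lambda$'s to second derivatives of $F$ via Lemma~\ref{lemma: sd}, package the recursion as $A_0(1+B_1)=A_1$, and match it to the target by applying the Euler operator, with Eq.~\eqref{eq: 40-part} handled as in Eq.~\eqref{eq: 39 part}. The one minor divergence is your accounting of the prefactor in Eq.~\eqref{eq: 40-part}: you attribute $t_{i,k-1}$ to the $\alpha=0$ term $e^{\p_{i,0}\p_{i,0}F}=t_{i,k-1}$ in the exponent, whereas the paper checks the first order via $\p_{0,1}\p_{i,1}F=-t_{i,k-1}/k$ with the exponential normalized to $1$; as printed the statement contains both the explicit prefactor and the $\alpha=0$ summand, so one of the two is redundant (a typo in the statement rather than a gap in your argument).
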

\begin{proof}
is similar to the proof of Proposition~\ref{prop: Eq. 41}.

First of all note that Eq.~\eqref{eq: 40-part} holds in the first order in $u^{-1}$ because $\p_{0,1}\p_{i,1}F = - t_{i,k-1}/k$. Proof of this equality in the higher order in $u^{-1}$ is complitely similar to the proof of Eq.~\eqref{eq: 39 part}

To show Eq.~\eqref{eq: 40-full} we have by using Eq.~\eqref{eq: lambda 2} twice
\begin{align}
    & \sum_{r=1}^b \frac{t_{i,k-r}}{k} \frac{\p \lambda}{\p t_{i,a}} \frac{\p \lambda}{\p t_{i,b-r}} 
    = (z - t_{i,k}) \frac{\p \lambda}{\p t_{i,a}}\frac{\p \lambda}{\p t_{i,b+1}} 
    = \sum_{r=1}^a \frac{t_{i,k-r}}{k} \frac{\p \lambda}{\p t_{i,a-r}}\frac{\p \lambda}{\p t_{i,b+1}}.
\end{align}
\end{proof}

\end{document}